%% file: main.tex
\documentclass[runningheads]{llncs}
\usepackage{amsmath}
\usepackage[T1]{fontenc}
\usepackage{graphicx}
\usepackage{amssymb}
\usepackage[ruled,vlined,linesnumbered]{algorithm2e}
\usepackage{booktabs}
\usepackage{listings}
\usepackage{tikz}
\usepackage{microtype}
\usepackage{bbding}
\usetikzlibrary{arrows.meta,patterns,positioning,matrix,decorations.pathreplacing}

\DeclareMathOperator{\CFL}{CFL}
\DeclareMathOperator{\ICFL}{ICFL}

\DeclareMathOperator{\V}{L}

\DeclareMathOperator{\NE}{next}
\DeclareMathOperator{\Bo}{border}
\DeclareMathOperator{\lce}{lce}
\newcommand{\quotes}[1]{``#1''}
\newcommand{\proc}[1]{\textnormal{\textsc{#1}}}
\title{The Inverse Lyndon Array}
\titlerunning{The Inverse Lyndon Array}
\author{Clelia De Felice\inst{1} \and
Pietro Negri\inst{1} \and
Manuel Sica\inst{1} \Envelope \and
Rocco Zaccagnino\inst{1} \and
Rosalba Zizza\inst{1}\Envelope}
\authorrunning{C. De Felice, P. Negri, M. Sica, R. Zaccagnino, and R. Zizza}

\institute{Dipartimento di Informatica, University of Salerno, Fisciano (SA), Italy\\
\email{cdefelice@unisa.it, p.negri137@gmail.com, masica@unisa.it, rzaccagnino@unisa.it, rzizza@unisa.it}}

\begin{document}

\maketitle

\begin{abstract}
The Lyndon array stores, at each position of a word, the length of the longest Lyndon factor starting at that position and plays an important role in combinatorics on words, for example, in the construction of fundamental data structures such as the suffix array.
In this paper, we introduce the \emph{Inverse Lyndon array}, the analogous structure for inverse Lyndon words, namely words that are lexicographically greater than all their proper nonempty suffixes. Unlike standard Lyndon words, inverse Lyndon words may have non-trivial borders, which introduces a genuine theoretical difficulty. We show that the Inverse Lyndon array can be characterized in terms of the next greater suffix array together with a border-correction term, and we prove that this correction coincides with a longest common extension (LCE) value. Building on this characterization, we adapt the nearest-suffix framework underlying Ellert's linear-time construction of the Lyndon array to the inverse setting, obtaining an $O(n)$-time algorithm for general ordered alphabets. 
Finally, we show that the Inverse Lyndon array can also be used to reconstruct the canonical inverse Lyndon factorization in linear time. 
\keywords{Inverse Lyndon words \and ICFL factorization \and Lyndon array \and Combinatorial algorithms}
\end{abstract}

\section{Introduction}

Lyndon words, widely studied in combinatorics on words, are primitive words that are lexicographically minimal among their rotations~\cite{lyndon-words}. Equivalently, a nonempty word $x$ is a Lyndon word if $x \prec s$ for every nonempty proper suffix $s$ of $x$, where $\prec$ is the lexicographic order induced by a total order on the alphabet~$\Sigma$.

The \emph{Lyndon array} $\lambda[1..n]$ of a word $x[1..n]$ gives, at each position $i$, the length of the longest Lyndon factor of $x$ starting at~$i$. 
Bannai et al.~\cite{runs-theorem} showed that $\lambda$ is sufficient to compute all maximal repetitions in linear time and so efficient construction algorithms for $\lambda$ have received considerable attention. In \cite{LOUZA} an algorithm
for computing $\lambda$ as a byproduct of the inversion of the bijective Burrows-Wheeler Transform is presented.
Franek et al.~\cite{algorithms-lyndon-array-original,algorithms-lyndon-array-revisited} related $\lambda$ to a \textit{next-smaller-value} computation on the inverse suffix array, while Ellert~\cite{lyndon-simple} later gave the first direct linear-time algorithm on general ordered alphabets, based on nearest smaller suffix array and a longest common extension (LCE) acceleration mechanism inspired by Manacher's algorithm~\cite{Manacher1975ANL}. Subsequent work considered space-efficient variants, sub-linear algorithms, and links with the Lyndon forest~\cite{badkobeh_et_al:LIPIcs.CPM.2022.13,sublinear-time,space-efficient}.

A closely related notion is that of \emph{inverse Lyndon words}. A nonempty word $x$ is an inverse Lyndon  word if $s \prec x$ for every nonempty proper suffix~$s$ of $x$~\cite{inverse-lyndon}. This notion
was introduced in connection with a variant of Chen-Fox-Lyndon ($\CFL$) factorization, the \emph{canonical inverse Lyndon factorization} ($\ICFL$), 
that may yield more balanced factors and therefore well suited to bioinformatics applications~\cite{lyndon-bio-example}. A border property of $\ICFL$ was established in~\cite{lata2020,tcs2021}, while a simpler right-to-left construction algorithm was given in~\cite{inverse-lyndon-2}. The inverse setting is not a mere dualization of the standard one~\cite{inverse-lyndon}: standard Lyndon words are unbordered, whereas inverse Lyndon words may have non-trivial borders.
 As a consequence, the classical identity $\lambda[i]=\mathit{next}[i]-i$ shown in \cite{lyndon-simple}, which relates 
the longest Lyndon word starting in $i$, with 
its nearest smaller suffix,
no longer survives unchanged, when we want to 
define the \emph{Inverse Lyndon array}~$\lambda^{-1}[1..n]$, where $\lambda^{-1}[i]$ is the length of the longest inverse Lyndon factor starting at position~$i$.
The main difficulty is not merely to reverse the inequalities in Ellert's framework, but to characterize how borders interact with nearest greater suffixes (NGS) while preserving the amortized linearity of the underlying LCE machinery. The \quotes{border correction} is crucial: unless it is identified with an LCE value on the NGS, a separate border computation would be required, preventing a direct linear-time reduction. Our result thus connects non-crossing LCE queries on general ordered alphabets~\cite{noncrossing-lce,ellert-runs-general} with the border-based view of inverse Lyndon words and $\ICFL$~\cite{inverse-lyndon-2}. Specifically, we show that the required border correction can be recovered directly from the LCE on the NGS. Moreover, the Inverse Lyndon array contains enough information to recover the canonical inverse Lyndon factorization in linear time, establishing a direct algorithmic connection between $\lambda^{-1}$ and $\ICFL$.

We briefly summarize our main contributions: the
definition of the 
Inverse Lyndon array $\lambda^{-1}$ and the
exact characterization of $\lambda^{-1}$ via the \emph{nearest greater suffix} (NGS) array and a border correction term (Lemma~\ref{lem:equiv-inv});
the identification of this correction with an LCE value on the NGS, avoiding explicit border computation (Lemma~\ref{lem:border-lce});
the adaptation of the non-crossing property, chain iteration, and LCE acceleration from the NSS/PSS setting to the NGS/PGS setting, yielding a linear-time algorithm for constructing $\lambda^{-1}$ over general ordered alphabets, together with an experimental evaluation in the Appendix;
a linear-time reduction from $\lambda^{-1}$ to $\ICFL$, showing that $\lambda^{-1}$ suffices to efficiently reconstruct $\ICFL$.

\section{Preliminaries}\label{sec:prelim}

\subsection{Words and Lexicographic Orders}
As in \cite{lyndon-simple}, in order to denote the integers $\{i, i+1, \ldots, j\}$,
we use the notations $[i,j]=[i, j+1)=(i-1,j]=(i-1,j+1)$.
A \emph{word} $x = x[1]x[2]\cdots x[n]$ is a sequence of symbols from a totally ordered alphabet $(\Sigma, <)$. We write $|x| = n$ for its length, $\varepsilon$ for the empty word, and $\Sigma^*$ and $\Sigma^+$ for the sets of all words and all nonempty words, respectively, on $\Sigma$. For $1 \le i \le j \le n$, the factor $x[i..j]$ is the sequence $x[i]x[i{+}1]\cdots x[j]$; if $i > j$, it equals $\varepsilon$ and if $x[i..j] \neq x$, it is a proper factor. 
A factor $x[1..j]$ is a \emph{prefix} of $x$
and $x[j..n]$ is a \emph{suffix} of $x$. 
The word $x_i = x[i..n]$ denotes the suffix of
$x$ starting at position~$i$.
We recall that, given a nonempty word $x$, a {\em border} of $x$ is a word which is both a proper prefix and a suffix of $x$. The longest proper prefix of $x$ which is a suffix of $x$ is also called {\em the border} of $x$.

The \emph{lexicographic order} $\prec$ on $\Sigma^*$ is defined as follows: $x \prec y$ (or $y \succ x$) 
if and only if either $y = xw$ for some nonempty word $w$, or $x = urv$ and $y = usw$ with $r < s$, $r,s \in \Sigma$.
For two nonempty words $x,y$, we write $x \ll y$ if
$x \prec y$ and $x$ is not a proper prefix of $y$
\cite{Bannai15}. We also write $y \succ x$ if $x \prec y$.
Basic properties of the lexicographic order are recalled below.

\begin{lemma} \label{proplexord}
For $x,y \in \Sigma^*$, the following properties hold.
\begin{itemize}
\item[(1)]
$x \prec y$ if and only if $zx \prec zy$,
for every word $z$.
\item[(2)]
If $x \ll y$, then $xu \ll yv$
for all words $u,v$.
\item[(3)]
If $x \prec y \prec xz$ for a word $z$,
then $y = xy'$ for some word $y'$ such that $y' \prec z$.
\end{itemize}
\end{lemma}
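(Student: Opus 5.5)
All three items follow directly from the definition of $\prec$. That definition has two clauses: the prefix clause, where $y = xw$ with $w \neq \varepsilon$, and the mismatch clause, where $x = urv$ and $y = usw$ with letters $r < s$. The plan is to check each item clause by clause, using that $\prec$ is a strict total order on $\Sigma^*$: for $x \neq y$, exactly one of the two clauses applies in one direction. We also use that $x \ll y$ holds exactly when the mismatch clause applies.

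For item (1), first assume $x \prec y$. If $y = xw$, then $zy = (zx)w$, so $zx \prec zy$. If $x = urv$ and $y = usw$, then $zx = (zu)rv$ and $zy = (zu)sw$, which again gives $zx \prec zy$.
For the converse, assume $zx \prec zy$. Then $x \neq y$, so by totality either $x \prec y$ or $y \prec x$. If $y \prec x$, the direction just proved gives $zy \prec zx$, which contradicts asymmetry. Hence $x \prec y$.

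For item (2), the hypothesis $x \ll y$ places us in the mismatch clause: $x = urv$, $y = usw$ with $r < s$. Then $xu' = ur(vu')$ and $yv' = us(wv')$; we rename the appended words $u', v'$ to avoid a clash with the $u$ of the decomposition. These words mismatch at the same position with $r < s$. So $xu' \prec yv'$, and $xu'$ is not a prefix of $yv'$ because they differ at position $|u|+1$. Therefore $xu' \ll yv'$.

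Item (3) is the only step needing a small case analysis, and it is where care is required. First we show that $x$ is a prefix of $y$. Suppose not. Since $x \prec y$, the pair $x, y$ falls under the mismatch clause: $x = urv$, $y = usw$ with $r < s$. This is the case $x \ll y$.
Now compare $y$ with $xz = ur(vz)$. The words $y$ and $xz$ agree on $u$, and then $y$ carries $s$ while $xz$ carries $r < s$. Hence $xz \prec y$, contradicting $y \prec xz$.
So $y = xy'$ for some word $y'$. Then $xy' \prec xz$, and item (1) with $z := x$ yields $y' \prec z$, which completes the argument. The degenerate case $y' = \varepsilon$ is allowed, since $\varepsilon \prec z$ is forced by $x \prec xz$; this requires $z \neq \varepsilon$, and indeed $z = \varepsilon$ is impossible because then $x \prec y \prec x$.
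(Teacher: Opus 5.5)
Your proof is correct. The paper gives no proof of this lemma (it only recalls it as a basic property of $\prec$), and your clause-by-clause check against the prefix and mismatch clauses of the definition is the standard verification.

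One small clean-up: the closing remark in (3) is unnecessary. The case $y'=\varepsilon$ cannot arise at all, because $x \prec y$ forces $y \neq x$. In any case, item (1) cancels the common prefix $x$ for every $y'$, empty or not.
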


The \emph{Longest Common Extension} of two suffixes 
$x_i, x_j$ of $x$, denoted by $\lce(i,j)$, is
the length of the longest common prefix of the suffixes, formally, $\lce(i,j)=|x_i|$ if $i = j$; otherwise
$\lce(i,j) = |u|$ if $x_i = uay, x_j = ubz, a \not = b.$

Following \cite{lyndon-simple}, we assume that every word starts and ends with special symbols $\#, \$ \not \in \Sigma$, named {\em sentinels}, i.e.,  $x = \#\,x(1..n)\,\$$. In addition,  $\#$ and $\$$ satisfy $\# < \$ < a$ for all $a \in \Sigma$ in the standard setting, and $\# > \$ > a$ for all $a \in \Sigma$ in the inverse setting. In the standard case, placing $\#$ strictly below every alphabet symbol ensures that $x_1$ is the lexicographic minimum among all suffixes of the word. In the inverse case, the reversed order makes $x_1$ the lexicographic maximum. The sentinels allows to avoid that a suffix is a prefix of another one, simplifying 
algorithms.

\begin{definition}[\cite{PIERREDUVAL1983363}]
A nonempty word $x$ is a \emph{Lyndon word} if $x \prec s$ for every nonempty proper suffix $s$.
\end{definition}

We denote by $\V $ the set of Lyndon words relative to the lexicographic order
on $\Sigma^*$.
The following is from \cite{PIERREDUVAL1983363,reu}.
A {\it sesquipower} of a word $x$ is a word $x^kp$ where
$p$ is a proper prefix of $x$ and $k \geq 0$. A sesquipower is called {\it nontrivial} if
$k \geq 1$.
Denote by $S$ the set of nontrivial sesquipowers of Lyndon words.
Clearly any element of $S$ has the form $(uv)^ku$, where $uv$ is a Lyndon word,
$u \in \Sigma^*$, $v \in \Sigma^+$, $k \geq 1$ and this representation is unique (see \cite{reu} for a proof).
It is also clear that $\V \subseteq S$.
The following result characterizes, for a given $z \in S$ and a letter $b$, whether $zb$ is still
in $S$ or not and, in the first case, whether $zb$ is a Lyndon word or not \cite{PIERREDUVAL1983363}.
Duval's algorithm for computing $\CFL$ is based on this result.

\begin{theorem} \label{FundamentalPreneck}
Let $z = (uav')^ku \in S$, where $uav'$ is a Lyndon word,
$u, v' \in \Sigma^*$, $a \in \Sigma$, $k \geq 1$. For any $b \in \Sigma$, the
word $zb$ is still in $S$ if and only if $a \leq b$. Moreover $zb$ is a Lyndon word if and only if
$a < b$.
\end{theorem}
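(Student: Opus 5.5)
We prove Theorem~\ref{FundamentalPreneck} (Duval's fundamental lemma) directly from the definition of Lyndon words and Lemma~\ref{proplexord}. Write $w = uav'$, so that $z=w^ku$ and $|u|<|w|$. Since $u$ is a proper prefix of $w$, the letter $a$ is exactly the letter of $w$ following the prefix $u$. In other words, $z$ has period $|w|$, and $za = w^k ua$ is again a sesquipower of $w$: it is either $w^{k+1}$ (when $v'=\varepsilon$) or $w^k(ua)$ with $ua$ a proper prefix of $w$. This settles the case $b=a$: $zb\in S$, and $zb$ is not Lyndon, because it is either a proper power $w^{k+1}$ or it has the nonempty proper border $ua$. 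A bordered word $x=pyp$-style, i.e.\ one with a proper border, is never Lyndon, since the border is a proper suffix that is a prefix, hence $\preceq x$.

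\textbf{Case $b<a$.} We show $zb\notin S$. Compare $zb$ with its suffix $s=ub$ (starting at the last copy of $w$). We have $ub \ll ua$, and $ua$ is a prefix of $w$, which is a prefix of $zb$, so by Lemma~\ref{proplexord}(2) $s\ll zb$. To rule out $zb$ being a nontrivial sesquipower $y^jq$ of some Lyndon $y$, we use a standard fact: every suffix of a sesquipower $y^jq$ is $\succeq$ some prefix of it. Indeed, a suffix starting inside a copy of $y$ is of the form $y'' y^{j'}q$ with $y''$ a suffix of $y$, and $y''\succ y$ gives either $y''\gg y$ or $y''$ a prefix of $y$. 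The second option is impossible unless $y''=y$ (since $y$ is unbordered), and the cases combine to show that the suffix is $\succeq$ a prefix of the word. But $s\ll zb$ means $s$ is strictly smaller than the prefix of $zb$ of the same length, a contradiction. This case is where the most care is needed; the cleanest route is to prove the auxiliary claim ``in an element of $S$, no suffix $s$ satisfies $s\ll$ the whole word'' by the decomposition just described.

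\textbf{Case $b>a$.} We show $zb$ is Lyndon (hence in $S$, since $\V\subseteq S$). Take any proper nonempty suffix $t$ of $zb$. It has the form $t = y\,w^{j}ub$, where $y$ is a nonempty suffix of $w$ and $j<k$, or $t$ is a suffix of $ub$.

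\emph{Subcase $y\neq w$.} Since $w$ is Lyndon and unbordered, $w\ll y$. Then Lemma~\ref{proplexord}(2) gives $zb\ll t$.

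\emph{Subcase $t=w^jub$ with $j<k$.} Comparing $t$ with $zb=w^j\,w^{k-j}ub$, Lemma~\ref{proplexord}(1) reduces this to comparing $ub$ with $w^{k-j}ub$. The latter starts with $ua$, and $ua\ll ub$, so $zb\prec t$.

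\emph{Subcase $t$ a suffix of $ub$.} If $t=u'b$ with $u'$ a nonempty proper suffix of $u$ (or $u'=\varepsilon$), then $u'$ is a suffix of $u$, which is a proper prefix of $w$. Write $w=u\,a\,v'$, so $u'a\,v'$ is a proper suffix of $w$, and therefore $w\ll u'av'$. Either the first mismatch between $w$ and $u'av'$ occurs within the first $|u'|$ letters, giving $w\ll u'b$ directly, or $u'$ is a prefix of $w$ and $w$'s next letter $c$ satisfies $c\le a<b$. In both situations $zb\prec t$.

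Hence every proper suffix of $zb$ is greater than $zb$, so $zb\in\V$.

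Combining the three cases gives both equivalences: $zb\in S$ if and only if $a\le b$, and $zb\in\V$ if and only if $a<b$.
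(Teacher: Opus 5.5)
The paper gives no proof of this statement. It is Duval's theorem, quoted from \cite{PIERREDUVAL1983363} and used only through Proposition~\ref{main}. Your argument is a correct, self-contained proof along the classical lines:
\begin{itemize}
\item for $b=a$, you exhibit $za$ as a bordered sesquipower;
\item for $b>a$, you compare every proper suffix of $zb$ with $w=uav'$ via Lemma~\ref{proplexord};
\item for $b<a$, you exhibit the suffix $ub\ll zb$ and show that no element of $S$ has a suffix that is $\ll$ the whole word.
\end{itemize}
What this buys over the paper's citation is self-containment, at the cost of length. The cases $b=a$ and $b>a$ are fine as written.

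The case $b<a$ needs tidying.
\begin{itemize}
\item \emph{Statement of the auxiliary claim.} As written, ``every suffix is $\succeq$ some prefix'' is vacuous, since the empty prefix works. What you actually use is that every suffix $s$ of $y^jq$ is either a prefix of $y^jq$ or satisfies $y^jq\ll s$.
\item \emph{The step about $y''\succ y$.} That sentence is garbled. For a nonempty proper suffix $y''$ of the Lyndon word $y$ you have $y\prec y''$ and $|y''|<|y|$. This gives $y\ll y''$ directly, with no appeal to unborderedness.
\item \emph{Missing suffixes.} This is the one real omission: your decomposition $y''y^{j'}q$ leaves out the suffixes that start inside $q$.
\end{itemize}

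The simplest repair for the missing suffixes uses two observations.
\begin{itemize}
\item The property ``no suffix $s$ of $X$ satisfies $s\ll X$'' is inherited by every prefix of $X$. A mismatch witnessing $s\ll x$ inside a prefix $x$ of $X$ also witnesses $S\ll X$ for the suffix $S$ of $X$ starting at the same position.
\item $y^jq$ is a prefix of $y^{j+1}$, and every suffix of $y^{j+1}$ has the form $y''y^{j'}$ with $y''$ a nonempty suffix of $y$.
\end{itemize}
For $y^{j+1}$ your two cases then cover everything. If $y''=y$, the suffix is a prefix of $y^{j+1}$. If $y''\neq y$, you get $y^{j+1}\ll y''y^{j'}$. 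With this fix, and the rewording above, the proof is complete.
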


\subsection{The Lyndon Array and the NSS/PSS Framework}\label{sec:nss}

A factor $x[i..j]$ is the \emph{longest Lyndon factor} of 
$x$ starting at~$i$ if it is a Lyndon word and either $j = n$ or $x[i..j{+}1]$ is not a Lyndon word.

\begin{definition}[Lyndon Array
\cite{algorithms-lyndon-array-original}]\label{def:lyn-arr}
The \emph{Lyndon array} $\lambda_x[1..n]$ of a word $x[1..n]$ is defined by
\[
\lambda_x[i] = \max\{m \in [1,n-i+1] \mid x[i..i+m-1] \mbox{ is a Lyndon word}\}.
\]
\end{definition}

In the sequel, if the context does not make it ambiguous, we 
do not use the subscript $x$
and $x = \#\,x(1..n)\,\$$, with $\# > \$ > a$ for all $a \in \Sigma$.
As a template for our inverse construction, we recall Ellert's definitions and results for computing $\lambda$ in $O(n)$ time~\cite{lyndon-simple}.

\begin{definition}[NSS and PSS edges]\label{def:nss-pss}
For a word $x[1..n]$, define
\[
\begin{array}{rcl}
\mathit{next}[i] & = & \min\{j \in (i,n] \mid x_j \prec x_i\}, \\
\mathit{prev}[i] & = & \max\{j \in [1,i) \mid x_j \prec x_i\},
\end{array}
\]
with $\mathit{next}[1] = \mathit{next}[n] = n{+}1$, $\mathit{prev}[1] = 0$, and $\mathit{prev}[n] = 1$.
\end{definition}

The sets above are nonempty, as they contain $\$$ and $\#$ respectively.
We provide an alternative proof of the lemma below.

\begin{lemma}\label{lem:equiv-std}
For all $i \in [1,n]$, $\mathit{next}[i] = i + \lambda[i]$.
\end{lemma}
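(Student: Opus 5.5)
Let $j=\mathit{next}[i]$, so $x_j\prec x_i$ and $x_k\succ x_i$ for every $k\in(i,j)$; these suffixes are distinct, and comparisons are strict because of the sentinel. The plan is to show two things: that $x[i..j-1]$ is a Lyndon word, giving $\lambda[i]\ge j-i$, and that no longer factor $x[i..m]$ with $m\ge j$ is Lyndon, giving $\lambda[i]\le j-i$. The sentinel convention makes $\mathit{next}[i]\le n+1$ well defined, so $j-i\in[1,n-i+1]$.

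\textbf{Lower bound.} Set $w=x[i..j-1]$ and take a proper suffix $s=x[k..j-1]$ with $i<k<j$. We know $x_k\succ x_i$, and we must show $s\succ w$. Suppose instead $s\preceq w$. Since $|s|<|w|$, this means either $s$ is a proper prefix of $w$, or $s\ll w$. If $s\ll w$, then by Lemma~\ref{proplexord}(2) we get $x_k=s\,x_j\ll w\,x_j=x_i$, a contradiction. So $s$ is a proper prefix of $w$, i.e.\ $s$ is a border of $w$. Write $x_i=s\,t$ with $t=x_{i+|s|}$, and note $x_k=s\,x_j$. Lemma~\ref{proplexord}(1) turns $x_k\succ x_i$ into $x_j\succ t$. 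Now consider $i+|s|$. It lies in $(i,j)$ because $s$ is a nonempty proper prefix of $w$, so $x_{i+|s|}\succ x_i$ by minimality of $j$. Together these give $x_j\succ t=x_{i+|s|}\succ x_i$, contradicting $x_j\prec x_i$. Hence $w$ is Lyndon.

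\textbf{Upper bound.} Let $m\ge j$ and $u=x[i..m]$; we show $u$ is not Lyndon by exhibiting the proper suffix $v=x[j..m]$ with $v\preceq u$ fails to be strictly greater. Write $x_i=w\,x_j$. From $x_j\prec x_i$: if $x_j\ll x_i$, let $\ell=\lce(i,j)$. If $\ell<m-j+1$, the mismatch lies inside $v$, so $v\ll u$ and $u$ is not Lyndon. Otherwise $v$ is a prefix of $x_i$ and hence a prefix of $u$, so $v\preceq u$, and again $u$ is not Lyndon. The case where $x_j$ is a proper prefix of $x_i$ cannot occur, because the sentinel $\$$ is unique.

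The main subtlety is the border case in the lower bound. This is exactly where the standard setting benefits from the position $i+|s|$ lying strictly between $i$ and $j$; I expect this to be the step whose inverse analogue fails and requires the border correction later.
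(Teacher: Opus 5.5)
Your proof is correct, and it does more than the paper's. The paper's argument has the same outline: put $h=\mathit{next}[i]$ and conclude $\lambda[i]=h-i$. Its middle step, that $x[i..h-1]$ is the longest Lyndon factor starting at $i$, is stated without justification, and that step is the entire content of the lemma. You actually prove it, in two directions.
\begin{itemize}
\item Your lower bound shows $x[i..h-1]$ is Lyndon. Lemma~\ref{proplexord}(2) disposes of the case $s\ll w$. Lemma~\ref{proplexord}(1), together with the intermediate suffix $x_{i+|s|}$ where $i<i+|s|<h$, excludes a border.
\item Your upper bound uses $\lce(i,h)$ to show that for every $m\ge h$ the proper suffix $x[h..m]$ is either $\ll x[i..m]$ or a proper prefix of it.
\end{itemize}
This is self-contained and uses only the definitions and Lemma~\ref{proplexord}. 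The paper's version is short only because it implicitly relies on the known identity from Ellert's framework, or on the Duval-type structure of Proposition~\ref{main}. One small point on the conventional values $\mathit{next}[1]=\mathit{next}[n]=n+1$: read $x_{n+1}$ as $\varepsilon$ and use that $x_1$ is the smallest suffix. Your argument then applies, with the upper bound vacuous.

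One correction to your closing remark: in the inverse setting, it is not the border case of the lower bound that breaks. For $h=\mathit{next}^{-1}[i]$, the factor $x[i..h-1]$ is still an inverse Lyndon word. There the border case is trivial, because a proper prefix is automatically smaller. What fails is the second case of your upper bound. When $m-h+1\le\lce(i,h)$, the suffix $x[h..m]$ is a proper prefix of $x[i..m]$. That rules out the Lyndon property, but it is exactly what inverse Lyndon words permit. This is where the correction term $\lce(i,\mathit{next}^{-1}[i])$ of Corollary~\ref{cor:recover} comes from.
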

\begin{proof}
Let $\NE[i] = h$. Therefore, the longest Lyndon factor of $x$ starting
from $i$ is $x[i .. h-1]$.
Consequently
$\lambda[i] = |x[i .. h-1]| = h - i = \NE[i] - i$.
\end{proof}


The following properties enable the linear-time construction of~\cite{lyndon-simple}.

\begin{lemma}[Non-Crossing]\label{lem:no-crossing}
Let $l_1 < r_1$ and $l_2 < r_2$ be index pairs, each connected by an NSS or PSS edge. Then it is impossible to have $l_1 < l_2 < r_1 < r_2$.
\end{lemma}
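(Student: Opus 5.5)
We argue by contradiction. Suppose $l_1<l_2<r_1<r_2$, where each pair $(l_k,r_k)$ is joined by an NSS edge ($r_k=\mathit{next}[l_k]$) or a PSS edge ($l_k=\mathit{prev}[r_k]$). The tool throughout is the extremality built into the definitions. For an NSS edge $(l,r)$, every $j\in(l,r)$ satisfies $x_j\succ x_l$, and $x_r\prec x_l$. For a PSS edge $(l,r)$, every $j\in(l,r)$ satisfies $x_j\succ x_r$, and $x_l\prec x_r$. Suffixes starting at distinct positions are distinct, so every comparison is strict. The sentinel conventions of Definition~\ref{def:nss-pss} make the boundary edges behave the same way, so they need no separate treatment. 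There are four combinations of edge types, and in each one I will use the two interior positions $l_2\in(l_1,r_1)$ and $r_1\in(l_2,r_2)$.

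\emph{Case NSS/NSS.} Since $l_2\in(l_1,r_1)$, we get $x_{l_2}\succ x_{l_1}$. Since $r_1\in(l_2,r_2)$ and $r_2=\mathit{next}[l_2]$, we get $x_{r_1}\succ x_{l_2}$. Combining, $x_{r_1}\succ x_{l_1}$, which contradicts $x_{r_1}\prec x_{l_1}$.

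\emph{Case PSS/PSS.} From $r_1\in(l_2,r_2)$ we get $x_{r_1}\succ x_{r_2}$. From $l_2\in(l_1,r_1)$ we get $x_{l_2}\succ x_{r_1}$. Hence $x_{l_2}\succ x_{r_2}$, contradicting $x_{l_2}\prec x_{r_2}$.

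\emph{Case NSS first, PSS second.} From $l_2\in(l_1,r_1)$ we get $x_{l_2}\succ x_{l_1}\succ x_{r_1}$. From $r_1\in(l_2,r_2)$ we get $x_{r_1}\succ x_{r_2}\succ x_{l_2}$. These give both $x_{l_2}\succ x_{r_1}$ and $x_{r_1}\succ x_{l_2}$, a contradiction.

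\emph{Case PSS first, NSS second.} From $l_2\in(l_1,r_1)$ we get $x_{l_2}\succ x_{r_1}$. From $r_1\in(l_2,r_2)$ we get $x_{r_1}\succ x_{l_2}$. Again this is a contradiction.

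No case is genuinely hard. The only step needing care is reading off correctly, for each edge type, which endpoint is the comparison point for the interior positions: the left endpoint $l$ for NSS and the right endpoint $r$ for PSS. Once that is fixed, each case closes with a two-step transitivity chain. The inverse (NGS/PGS) version needed later follows by reversing every inequality.
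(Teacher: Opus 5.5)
Your proof is correct, and it is the same contradiction the paper uses. The paper proves only the NGS/PGS mirror (in the appendix) and cites Ellert for this version; there it first notes that every suffix starting strictly inside an edge is greater than \emph{both} endpoints, which gives $x_{l_2}\succ x_{r_1}$ from the first edge and $x_{r_1}\succ x_{l_2}$ from the second, and so collapses your four cases into one two-line argument.
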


Given $r \in [1,n]$ and an integer $e\geq 0$,
a {\em chain of previous smaller suffixes}
is recursively defined as follows: 
$prev^0[r]=r$, $prev^{e+1}[r]=prev^e[prev[r]]$.
We denote $\ell=prev^*[r]$, if there is an integer $e \geq 0$ s.t. $\ell=prev^e[r]$.

Let $x=\#x(1..n)\$$ be a word with previous and
next smaller suffix arrays $prev$ and $next$.

\begin{lemma}[Chain Iteration]\label{lem:chain}
For any $\ell, r \in [1,n]$:
\begin{enumerate}
\item[(i)] $\mathit{prev}[r] = \mathit{prev}^*[r{-}1]$.
\item[(ii)] $\mathit{next}[\ell] = r$ if and only if $\ell = \mathit{prev}^*[r{-}1]$ and $\ell > \mathit{prev}[r]$.
\end{enumerate}
\end{lemma}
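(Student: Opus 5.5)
We prove (i) first and then obtain (ii) from (i) together with Lemma~\ref{lem:no-crossing}. Throughout we use the sentinel convention from Definition~\ref{def:nss-pss}: position $1$ carries $\#$, position $n$ carries $\$$, and with the standard order $x_1$ is the smallest suffix. Hence every chain $\mathit{prev}^e[r]$ strictly decreases and eventually reaches $1$.

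\textbf{Part (i).} Let $p=\mathit{prev}[r]$, so $p<r$, $x_p\prec x_r$, and $x_j\succeq x_r$ for every $j\in(p,r)$. If $p=r-1$ there is nothing to prove. Otherwise, consider the chain $r-1=\ell_0>\ell_1>\cdots$ with $\ell_{e+1}=\mathit{prev}[\ell_e]$. We claim by induction that every chain element $\ell_e$ satisfying $\ell_e>p$ has $x_{\ell_e}\succeq x_r$; strictly $\succ$ holds because distinct suffixes differ, thanks to the sentinel.
\begin{itemize}
\item The base case $\ell_0=r-1$ holds, since $r-1\in(p,r)$.
\item For the induction step, take $\ell=\ell_e>p$. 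Since $x_p\prec x_r\prec x_\ell$ and $p<\ell$, position $p$ is a candidate in the definition of $\mathit{prev}[\ell]$. Therefore $\mathit{prev}[\ell]\ge p$.
\end{itemize}
So the chain never jumps past $p$. Because it strictly decreases and cannot skip over $p$, it must hit $p$ exactly. This gives $p=\mathit{prev}^*[r-1]$.

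\textbf{Part (ii), direction $(\Rightarrow)$.} Suppose $\mathit{next}[\ell]=r$, so $\ell<r$. By definition of $\mathit{next}$, $x_j\succ x_\ell$ for all $j\in(\ell,r)$, and $x_r\prec x_\ell$.
\begin{itemize}
\item \emph{$\ell>\mathit{prev}[r]$:} Every $j\in(\ell,r)$ has $x_j\succ x_\ell\succ x_r$, so no such $j$ is a candidate for $\mathit{prev}[r]$. Position $\ell$ itself is not a candidate either, since $x_\ell\succ x_r$. Hence $\mathit{prev}[r]<\ell$.
\item \emph{$\ell=\mathit{prev}^*[r-1]$:} Run the chain from $r-1$. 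Any chain element $m\in(\ell,r)$ has $x_m\succ x_\ell$, so $\ell$ is a candidate for $\mathit{prev}[m]$, giving $\mathit{prev}[m]\ge\ell$. The chain cannot skip $\ell$, so it reaches $\ell$. (If $\ell=r-1$ this is immediate.)
\end{itemize}

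\textbf{Part (ii), direction $(\Leftarrow)$.} Suppose $\ell=\mathit{prev}^*[r-1]$ and $\ell>\mathit{prev}[r]=:p$. We establish the two facts that together say $\mathit{next}[\ell]=r$.
\begin{itemize}
\item \emph{$x_r\prec x_\ell$:} This holds because $\ell\in(p,r)$ and, by definition of $\mathit{prev}[r]$, every position in $(p,r)$ carries a suffix $\succeq x_r$.
\item \emph{$x_j\succ x_\ell$ for every $j\in(\ell,r)$:} Write the chain $r-1=m_0>m_1>\cdots>m_t=\ell$. For consecutive elements $m_{s+1}=\mathit{prev}[m_s]$, every $j\in(m_{s+1},m_s)$ satisfies $x_j\succeq x_{m_s}$, and $x_{m_{s+1}}\prec x_{m_s}$. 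Hence the minima over successive segments decrease along the chain. It follows that $x_\ell$ is smaller than every $x_j$ with $j\in(\ell,r-1]$.
\end{itemize}
Thus $r$ is the first position after $\ell$ whose suffix is smaller than $x_\ell$, i.e., $\mathit{next}[\ell]=r$.

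\textbf{Main obstacle and boundary cases.} The step needing most care is the "cannot skip" argument, which requires exhibiting a valid candidate for $\mathit{prev}$ at each chain element. Lemma~\ref{lem:no-crossing} offers an alternative route here: the edge $(\mathit{prev}[m],m)$ cannot cross the edge $(\ell,r)$ or $(p,r)$. The boundary conventions $\mathit{next}[1]=\mathit{next}[n]=n+1$ and $\mathit{prev}[n]=1$ must be checked against the sentinel positions separately. Those checks are routine.
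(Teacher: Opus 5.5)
Your proof is correct. It reaches (i) and the forward direction of (ii) through a different key step than the paper's template.

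The paper does not prove this standard-setting lemma itself; it is taken from Ellert. Its appendix proof of the inverse analogue, Lemma~\ref{lem:inv-chain}, follows Ellert's route. Suppose the chain from $r-1$ failed to hit the target $t$, where $t=\mathit{prev}[r]$ in (i) and $t=\ell$ in (ii). Then some chain element $r'$ would satisfy $\mathit{prev}[r'] < t < r' < r$, contradicting the Non-Crossing Lemma~\ref{lem:no-crossing}.

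You argue directly from the definition of $\mathit{prev}$ instead. For every chain element $m\in(t,r)$, the target is itself an admissible candidate for $\mathit{prev}[m]$: we have $x_t\prec x_r\prec x_m$ in (i), and $x_t\prec x_m$ in (ii). Hence $\mathit{prev}[m]\ge t$, and the strictly decreasing chain cannot jump over $t$. This is exactly the instance of non-crossing that is needed, proved inline, so the two arguments are logically very close. Yours is self-contained, since Lemma~\ref{lem:no-crossing} is only stated in the paper. It also makes explicit why the chain must reach $t$. The non-crossing version silently uses that the chain eventually drops to or below $t$ in order to produce $r'$.

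The backward direction of (ii) coincides with the paper's. The fact $\ell\in(\mathit{prev}[r],r)$ gives $x_r\prec x_\ell$. Your induction over the segments $(m_{s+1},m_s)$ is a careful rendering of what the paper calls ``the chain property''.

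Two cosmetic points:
\begin{enumerate}
\item In (i), the invariant your induction really propagates is $\ell_e\ge p$. The inequality $x_{\ell_e}\succ x_r$ for $\ell_e\in(p,r)$ is immediate from the definition of $\mathit{prev}[r]$.
\item $x_j\neq x_r$ for $j\neq r$ holds simply because the two suffixes have different lengths, not because of the sentinel.
\end{enumerate}
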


\begin{lemma}[LCE Acceleration]\label{lem:lce-accel}
Let $k \in (1,n)$, $\ell = \mathit{prev}[k]$ and $r = \mathit{next}[k]$. Then:
\begin{enumerate}
\item[(i)] if $\mathrm{lce}(\ell,k) = \mathrm{lce}(k,r)$, then $\mathrm{lce}(\ell,r) \ge \mathrm{lce}(k,r)$ and either $prev[r]= \ell$ or $next[\ell]=r$.
\item[(ii)] if $\mathrm{lce}(\ell,k) < \mathrm{lce}(k,r)$, then $\mathrm{lce}(\ell,r) = \mathrm{lce}(\ell,k)$ and $\mathit{prev}[r] = \ell$.
\item[(iii)] if $\mathrm{lce}(\ell,k) > \mathrm{lce}(k,r)$, then $\mathrm{lce}(\ell,r) = \mathrm{lce}(k,r)$ and $\mathit{next}[\ell] = r$.
\end{enumerate}
\end{lemma}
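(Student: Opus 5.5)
We prove the three cases from the ordering facts $x_\ell \prec x_k$ (since $\ell=\mathit{prev}[k]$) and $x_r \prec x_k$ (since $r=\mathit{next}[k]$), using Lemma~\ref{lem:no-crossing} and Lemma~\ref{lem:chain} to locate the relevant edges. Write $a=\mathrm{lce}(\ell,k)$ and $b=\mathrm{lce}(k,r)$. Thanks to the sentinels no suffix is a proper prefix of another, so every comparison is decided by a mismatching letter. Thus $x_\ell$ and $x_k$ agree on $a$ letters and then $x_\ell[a+1]<x_k[a+1]$. Likewise $x_r$ and $x_k$ agree on $b$ letters and then $x_r[b+1]<x_k[b+1]$. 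These are the only ingredients needed for the LCE statements.

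\textbf{LCE values.} In case (ii), $a<b$, so $x_r[1..a+1]=x_k[1..a+1]$. Hence $x_\ell$ and $x_r$ agree on $a$ letters and then $x_\ell[a+1]<x_k[a+1]=x_r[a+1]$. This gives $\mathrm{lce}(\ell,r)=a$ and $x_\ell\prec x_r$. In case (iii), $a>b$, and the symmetric argument gives $\mathrm{lce}(\ell,r)=b$ and $x_r\prec x_\ell$. In case (i), $a=b$, so both $x_\ell$ and $x_r$ share the prefix $x_k[1..a]$, hence $\mathrm{lce}(\ell,r)\ge a$. Here the order between $x_\ell$ and $x_r$ is undetermined, and we only split on it afterwards.

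\textbf{Locating the edges.} The structural fact I would prove first is that no suffix strictly between $\ell$ and $r$, other than those inside $(\ell,k)$ and $(k,r)$, is smaller than $\min(x_\ell,x_r)$. More precisely:
\begin{itemize}
\item for $j\in(\ell,k)$ we have $x_j\succ x_k$, by definition of $\mathit{prev}[k]$;
\item for $j\in(k,r)$ we have $x_j\succ x_k$, by definition of $\mathit{next}[k]$;
\item $x_k\succ x_\ell$ and $x_k\succ x_r$.
\end{itemize}
So every $j\in(\ell,r)$ satisfies $x_j\succ x_k\succ\max$-bounded below by both endpoints, that is, $x_j \succ x_\ell$ and $x_j\succ x_r$. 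Now consider the two orders. If $x_\ell\prec x_r$, then $\ell$ is the largest index below $r$ with a smaller suffix, so $\mathit{prev}[r]=\ell$. If $x_r\prec x_\ell$, then $r$ is the smallest index above $\ell$ with a smaller suffix, so $\mathit{next}[\ell]=r$. This yields the edge claims of (ii) and (iii) directly. For (i) it yields the disjunction according to which of $x_\ell,x_r$ is smaller.

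\textbf{Expected obstacle.} The main care point is the boundary conventions: $\mathit{prev}[n]=1$, $\mathit{next}[1]=n+1$, and the sentinel positions $0$ and $n+1$. The hypothesis $k\in(1,n)$ guarantees that $\ell$ and $r$ are genuine indices, possibly sentinel positions. I would check that the sentinel order ($\#$ playing the role of minimum in the standard setting) makes the comparisons with $x_0$ and $x_{n+1}$ behave as above, so that the ``mismatch exists'' step never fails. Alternatively, one could invoke Lemma~\ref{lem:chain}(ii) with $\ell=\mathit{prev}^*[r-1]$, but the direct minimality argument above avoids that dependency.
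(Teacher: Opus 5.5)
Your argument is correct and is essentially the paper's: the paper proves only the mirrored NGS/PGS version (Lemma~\ref{lem:inv-lce-accel}), and it proceeds exactly as you do, comparing first mismatching letters for the LCE claims and then using that every suffix starting strictly between $\ell$ and $r$ is greater than both $x_\ell$ and $x_r$ to read off $\mathit{prev}[r]=\ell$ or $\mathit{next}[\ell]=r$. One cosmetic fix: the sentence introducing your structural fact should say that no such suffix is smaller than $\max(x_\ell,x_r)$ (not $\min$), and $x_j\succ x_k$ holds only for $j\neq k$; your bullet points already establish the correct statement.
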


Together with the SMART-LCE procedure of~\cite{lyndon-simple}, which computes all required LCE values in amortized $O(n)$ time via a global rightmost-inspected-position variable, these lemmas yield an algorithm that computes $\lambda$ in $O(n)$ time and $O(n)$ space on general ordered alphabets.

On the other hand, the computation of the Lyndon array is closely related to Duval's algorithm for calculating 
$\CFL(x)$ since this algorithm computes the longest Lyndon prefix of a word $x$.
In turn, Duval's algorithm is based on Theorem \ref{FundamentalPreneck} and computes the longest nontrivial sesquipower
of $x$ which is a prefix of $x$ (see \cite{PIERREDUVAL1983363,reu}).
These observations are formalized in the following proposition, where $\Bo(x)$ denotes the length of the border of $x$.

\begin{proposition} \label{main}
Let $x[i \ldots i+m-1]$ be the longest nontrivial sesquipower of $x$ starting
from $i$. Then the following facts hold true.
\begin{enumerate}
\item
$x[i .. i+m-1] = (uav')^ku$, where $uav'$ is a Lyndon word,
$u, v' \in \Sigma^*$, $a \in \Sigma$, $k \geq 1$ and $x[i+m] = b$, 
where $a > b$;
\item
$uav'$ is the longest Lyndon factor of $x$ starting
from $i$;
\item
$|uav'| = |x[i .. i+m-1]| - \Bo(x[i .. i+m-1])$.
\end{enumerate}
\end{proposition}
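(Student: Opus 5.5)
We prove the three items in order. Each one rests on Theorem~\ref{FundamentalPreneck} and on the uniqueness of the representation $(uv)^ku$ of an element of $S$. The sentinels ensure that every single letter $x[i]$ is a Lyndon word, hence an element of $S$. So the set of nontrivial sesquipowers starting at $i$ is nonempty, and it is bounded because the final sentinel stops the extension. Thus $x[i..i+m-1]$ is well defined.

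\textbf{Item 1.} Write $z=x[i..i+m-1]\in S$. By the uniqueness recalled before Theorem~\ref{FundamentalPreneck}, $z$ has a unique form $(uav')^ku$ with $uav'\in\V$, $k\ge 1$. Here $a$ is the first letter of the non-empty word $v=av'$. Let $b=x[i+m]$; it exists thanks to the sentinel. By maximality of $m$, $zb\notin S$, and Theorem~\ref{FundamentalPreneck} then gives $a>b$.

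\textbf{Item 2.} We must show that no Lyndon factor starting at $i$ is longer than $uav'$. First, a Lyndon word is in $S$. So any Lyndon factor $x[i..j]$ is a nontrivial sesquipower starting at $i$, and hence $j\le i+m-1$. Next, take any prefix $w$ of $z$ with $|uav'|<|w|\le|z|$. Then $w=(uav')^{k'}p$ with $k'\ge1$, where $p$ is a proper prefix of $uav'$ and not both $k'=1$ and $p=\varepsilon$. Such a $w$ has a border: either $p\neq\varepsilon$ is a border, or $w=(uav')^{k'}$ with $k'\ge2$ has the border $uav'$. A bordered word $w=s\cdots s$ is not Lyndon, since its proper suffix $s$ is a prefix of $w$ and hence $s\prec w$. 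Therefore the longest Lyndon prefix of $x_i$ has length exactly $|uav'|$.

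\textbf{Item 3.} We need $\Bo(z)=|z|-|uav'|=(k-1)|uav'|+|u|$. The word $(uav')^{k-1}u$ is clearly a border of $z$, so $\Bo(z)$ is at least this quantity. Suppose, for contradiction, that there is a longer border. Then $z$ has a period $q<|uav'|$ with $q=|z|-\Bo(z)$. Since $|z|\ge|uav'|$, the Lyndon word $uav'$ has period $q$, and therefore $uav'$ has a nonempty border. This contradicts the fact that Lyndon words are unbordered, which follows from the same argument used in Item 2.

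The main obstacle is the precise handling of the period argument in Item 3, in particular the case $k=1$ with $u$ possibly empty. It reduces cleanly once we note that a period $q<|uav'|$ of $z$ restricts to a period of its prefix $uav'$.
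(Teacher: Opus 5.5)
Your proof is correct, but it does not follow a route from the paper, because the paper gives no proof of this proposition. It presents it as a formalization of the analysis behind Duval's algorithm and defers to \cite{PIERREDUVAL1983363,reu}, where the fact that the longest sesquipower prefix $(uav')^ku$, followed by $b<a$, has $uav'$ as its longest Lyndon prefix is part of that algorithm's correctness. Your argument is self-contained instead. You use Theorem~\ref{FundamentalPreneck} only for item~1. Items~2 and~3 rest on two elementary facts: Lyndon words are unbordered, and every prefix of $z$ longer than $|uav'|$ is bordered because $|uav'|$ is a period of $z$. Item~3 in fact shows that $|uav'|$ is the \emph{least} period of $z$. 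This gives $\Bo(z)=(k-1)|uav'|+|u|$ directly, and as a by-product re-proves the uniqueness of the representation $(uv)^ku$ that you invoke. The citation buys the paper brevity and the link to $\CFL$. Your version makes explicit the border structure that Lemmas~\ref{lem:equiv-inv} and~\ref{lem:border-lce} later rely on.

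The existence of $b=x[i+m]$ needs more than the one line you gave it. It requires that $i$ is not a sentinel position and that $\$$ is smaller than every letter of $\Sigma$ in the order being used. That is the standard convention, or its image under $<_{in}$ in the inverse setting. Under these conditions $x[i..n]\notin S$, for two reasons:
\begin{itemize}
\item a form $(uv)^ku$ with $u\neq\varepsilon$ or $k\ge 2$ would make its last letter, the unique $\$$, occur twice;
\item $x[i..n]=v$ cannot be Lyndon, because its proper suffix $\$$ is smaller than it.
\end{itemize}
At $i=1$ and $i=n$ no such $b$ exists, so the statement and your proof should be read for $1<i<n$. Also, single letters are Lyndon words trivially; the sentinels play no role there.
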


\section{The Inverse Lyndon array}\label{sec:inverse}

We now introduce the Inverse Lyndon array. 
We recall the definition of {inverse Lyndon words} and for their
interesting properties we refer to
\cite{DLT22,inverse-lyndon-2,inverse-lyndon,lata2020,tcs2021}.

\begin{definition}\label{def:inv-lyn}
A word $w \in \Sigma^+$ is an \emph{inverse Lyndon word} if $s \prec w$ 
for every nonempty proper suffix~$s$ of~$w$.
\end{definition}

Unlike standard Lyndon words, inverse Lyndon words may have non-trivial borders; for instance, $\mathit{dabda}$ has border $\mathit{da}$. We now introduce the analogous notion of longest inverse Lyndon factor.

\begin{definition}[Longest Inverse Lyndon Factor]\label{def:inv-max}
A factor $x[i..j]$ is a \emph{longest inverse Lyndon factor} starting at~$i$ if it is an inverse Lyndon word and either $j = n$ or $x[i..j{+}1]$ is not an inverse Lyndon word.
\end{definition}

\begin{definition} \label{ILO}
Let $(\Sigma, <)$ be a totally ordered alphabet.
The {\rm inverse} $<_{in}$ of $<$ is defined
by
$b <_{in} a \Leftrightarrow a < b, \quad \forall a, b \in \Sigma$.
The {\rm inverse lexicographic order} (or simply {\rm inverse order})
on $\Sigma^*$, denoted $\prec_{in}$, is the lexicographic order
on $\Sigma^*$ defined by $(\Sigma, <_{in})$.
\end{definition}

Following \cite{antiLyndon}, a Lyndon word relative to the inverse lexicographic order
on $\Sigma^*$ will be named an {\it anti-Lyndon word}.
Proposition \ref{InvChar}, proved in \cite{inverse-lyndon}
states a fundamental relation between inverse Lyndon words and anti-Lyndon words.

\begin{proposition} \label{InvChar}
Let $x \in \Sigma^+$. Then
$x$ is an inverse Lyndon word if and only if
$x$ is a nontrivial sesquipower of an anti-Lyndon word.
\end{proposition}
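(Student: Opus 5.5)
We prove the two implications separately. Throughout, an anti-Lyndon word is a Lyndon word for $\prec_{in}$. The basic tool is the translation between $\prec$ and $\prec_{in}$: for words $u,v$ with neither a prefix of the other, $u \prec v$ if and only if $v \prec_{in} u$. This holds because the first mismatching letters compare in opposite directions under $<$ and $<_{in}$. When one word is a proper prefix of the other, both orders agree that the prefix is smaller.

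\emph{($\Leftarrow$)} Suppose $x=(uv)^k u$ with $uv$ anti-Lyndon, $v\neq\varepsilon$, $k\ge 1$, and let $s$ be a nonempty proper suffix of $x$.
\begin{itemize}
\item If $s$ is a prefix of $x$, then $s\prec x$ immediately.
\item Otherwise, write $s = w(uv)^j u$ with $w$ a nonempty proper suffix of $uv$ and $j<k$. (If $s$ is a suffix of $u$, treat it as a suffix of the final $uv$ block read cyclically.) Since $uv$ is anti-Lyndon, it is unbordered, so $w$ is not a prefix of $uv$. Anti-Lyndonness also gives $uv \prec_{in} w$, so $w \prec_{in}$-exceeds $uv$ at a genuine mismatch. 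Translating, $w \ll uv$ for $\prec$. By Lemma~\ref{proplexord}(2), $s \ll x$.
\item The suffixes lying entirely inside the last $u$ are handled the same way. Such an $s$ is a suffix of $u$, hence a suffix of $uv$-type words, and one compares $s$ against the prefix $uv\cdots$ of $x$. Either $s$ is a prefix of $x$, which gives $s\prec x$, or there is a mismatch inherited from comparing $s\,v\cdots$ with $uv$. In the mismatch case, anti-Lyndonness of $uv$ forces the mismatch to go the right way.
\end{itemize}
So $s\prec x$ in every case.

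\emph{($\Rightarrow$)} Let $x$ be inverse Lyndon and consider its $\prec_{in}$-Duval structure. Let $z=(uav')^k u$ be the longest prefix of $x$ that is a nontrivial sesquipower of an anti-Lyndon word. The first letter alone is such a sesquipower, so $z$ exists. If $z\neq x$, let $b$ be the next letter. By the $\prec_{in}$ version of Theorem~\ref{FundamentalPreneck}, maximality gives $b <_{in} a$, i.e.\ $b>a$ in the original order. Now take the suffix $s = u b\cdots$ of $x$ starting at position $|(uav')^k|+1$. It agrees with the prefix $ua\cdots$ of $x$ up to $u$ and then has $b>a$. Hence $x\prec s$, which contradicts $x$ being inverse Lyndon. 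Therefore $z=x$.

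The main obstacle is the $(\Leftarrow)$ direction, specifically the suffixes that begin inside the trailing partial block $u$, or that begin as a proper suffix $w$ of $uv$ with $w$ comparable to $u$ but not to $uv$. There the prefix and mismatch cases have to be separated carefully, using unborderedness of $uv$ and the translation between $\prec$ and $\prec_{in}$. I would organize this step by comparing $s$ letter by letter with the periodic extension of $uv$: $x$ is a prefix of $(uv)^{\omega}$, and every mismatch between $s$ and $x$ occurs where a suffix of $uv$ is compared with $uv$ itself.
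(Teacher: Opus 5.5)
Your proof is correct. The paper gives no proof of this proposition; it is quoted from \cite{inverse-lyndon}, so there is no in-paper argument to compare against. What you supply is a valid self-contained derivation from tools the paper does state: Lemma~\ref{proplexord}(2), and Theorem~\ref{FundamentalPreneck} read for the order $<_{in}$. Using that theorem for $<_{in}$ is legitimate, since it holds for any total order on the alphabet.

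Your ($\Rightarrow$) direction is clean. Maximality of $z$ forces $zb$ not to be a nontrivial sesquipower of an anti-Lyndon word, hence $b<_{in}a$, i.e.\ $a<b$. The suffix $s=ub\cdots$ beginning at position $|(uav')^k|+1\ge 2$ then satisfies $x\ll s$, which contradicts $s\prec x$.

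In ($\Leftarrow$) the reasoning is right, but the third bullet is phrased loosely. The parenthetical about reading a suffix of $u$ ``cyclically'' inside a final $uv$ block should simply be dropped, since $x$ ends with $u$, not $uv$. A precise version of the third bullet runs as follows.
\begin{itemize}
\item Let $s$ be a nonempty proper suffix of the final $u$. Then $sv$ is a nonempty proper suffix of $uv$.
\item So $sv\ll uv$, by exactly the argument of your second bullet: unborderedness, plus $uv\prec_{in} sv$ and the order translation.
\item If the first mismatch between $sv$ and $uv$ falls within the first $|s|$ letters, then $s\ll uv$. Because $uv$ is a prefix of $x$, Lemma~\ref{proplexord}(2) gives $s\ll x$.
\item Otherwise $s$ is a proper prefix of $uv$, hence a proper prefix of $x$, so again $s\prec x$.
\end{itemize}

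It is also tidier to split the cases by the starting position of $s$, rather than by ``$s$ is a prefix of $x$ / otherwise''. The three positions are: a block boundary, which gives the proper prefix $(uv)^ju$; the interior of a full $uv$ block; and the interior of the final $u$. With that cleanup the proof is complete.
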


\subsection{The Inverse Lyndon array and NGS/PGS Edges}

In the following, $x = \#\,x(1..n)\,\$$, with $\# > \$ > a$ for all $a \in \Sigma$.

\begin{definition}[Inverse Lyndon array]\label{def:inv-arr}
The \emph{Inverse Lyndon array} $\lambda_x^{-1}[1..n]$ of a word $x[1..n]$ is defined by
\[
\lambda_x^{-1}[i] = \max\{m \in [1,n-i+1] \mid x[i..i+m-1] \mbox{ is an inverse Lyndon word}\}.
\]
\end{definition}

In the sequel, we do not use the subscript $x$.

\begin{definition}[NGS and PGS Arrays]\label{def:ngs-pgs}
For a word $x[1..n]$, we define:
\[
\begin{array}{rcl}
\mathit{next}^{-1}[i] & = & \min\{j \in (i,n] \mid x_j \succ x_i\}, \\
\mathit{prev}^{-1}[i] & = & \max\{j \in [1,i) \mid x_j \succ x_i\},
\end{array}
\]
with $\mathit{next}^{-1}[1] = \mathit{next}^{-1}[n] = n{+}1$, $\mathit{prev}^{-1}[1] = 0$, and $\mathit{prev}^{-1}[n] = 1$.
\end{definition}

Thus, in the inverse setting, the relevant edges connect nearest \emph{greater} suffixes instead of 
nearest smaller suffixes.

\begin{example}\label{ex:inv-arr}
Let $x = \#\mathit{aababbaa}\$$, with $\# > \$ > b > a$. The Inverse Lyndon array and the associated NGS/PGS arrays are:
\begin{center}
\small
$\begin{array}{l|cccccccccc}
i & 1 & 2 & 3 & 4 & 5 & 6 & 7 & 8 & 9 & 10 \\
x & \# & a & a & b & a & b & b & a & a & \$ \\
\hline
\lambda^{-1}        & 10 & 2 & 1 & 3 & 1 & 4 & 3 & 2 & 1 & 1 \\
\mathit{next}^{-1}  & 11 & 3 & 4 & 6 & 6 & 10 & 10 & 9 & 10 & 11 \\
\mathit{prev}^{-1}  & 0  & 1 & 1 & 1 & 4 & 1 & 6 & 7 & 7 & 1 \\
\mathit{border}       & 0  & 1 & 0 & 1 & 0 & 0 & 0 & 1 & 0 & 0 \\
\end{array}$
\end{center}
At $i=4$, the longest inverse Lyndon factor is $x[4..6] = \mathit{bab}$, so $\lambda^{-1}[4] = 3$. Its border has length $1$, hence $\mathit{next}^{-1}[4] = 6$ and $\mathit{border}[4] = 1$ (as we will see, in agreement with Corollary~\ref{cor:recover}). At $i=6$, the longest inverse Lyndon factor is $x[6..9] = \mathit{bbaa}$, so $\lambda^{-1}[6] = 4$. Since this factor is unbordered, $\mathit{next}^{-1}[6] = 10$ and $\mathit{border}[6] = 0$. Figure~\ref{fig:ngs-arcs} shows the corresponding inverse nearest-suffix structure on the same word. As done in \cite{lyndon-simple},
$\mathit{next}^{-1}$ and
$\mathit{prev}^{-1}$ arrays are represented by directed edges 
connecting the positions $i$ to $j$,
whenever $\mathit{prev}^{-1}[i]=j$ (PGS edge) or $\mathit{next}^{-1}[i]=j$
(NGS edge). 
\end{example}

\begin{figure}[tbp]
\centering
\begin{tikzpicture}[x=1.08cm,y=0.82cm,>=Latex,line cap=round,line join=round]

  \foreach \i/\c in {1/\#,2/a,3/a,4/b,5/a,6/b,7/b,8/a,9/a,10/\$} {
    \node[draw,minimum width=8mm,minimum height=8mm,inner sep=0pt] (c\i) at (\i,0) {\footnotesize $\c$};
    \node[anchor=north east,font=\scriptsize,inner sep=0.5pt] at ([xshift=-0.5pt,yshift=-0.5pt]c\i.north east) {\i};
  }

  \draw[->,thick,dashed] (c2.north) to[out=72,in=108,looseness=1.02] (c3.north);
  \draw[->,thick,dashed] (c3.north) to[out=72,in=108,looseness=1.02] (c4.north);
  \draw[->,thick,dashed] (c4.north) to[out=72,in=108,looseness=1.18] (c6.north);
  \draw[->,thick,dashed] (c5.north) to[out=72,in=108,looseness=1.02] (c6.north);
  \draw[->,thick,dashed] (c6.north) to[out=72,in=108,looseness=1.42] (c10.north);
  \draw[->,thick,dashed] (c7.north) to[out=72,in=108,looseness=1.28] (c10.north);
  \draw[->,thick,dashed] (c8.north) to[out=72,in=108,looseness=1.02] (c9.north);
  \draw[->,thick,dashed] (c9.north) to[out=72,in=108,looseness=1.02] (c10.north);

  \draw[->,thick] (c2.south) .. controls +(0,-0.35) and +(0,-0.35) .. (c1.south);
  \draw[->,thick] (c3.south) .. controls +(0,-0.55) and +(0,-0.55) .. (c1.south);
  \draw[->,thick] (c4.south) .. controls +(0,-0.75) and +(0,-0.75) .. (c1.south);
  \draw[->,thick] (c5.south) .. controls +(0,-0.35) and +(0,-0.35) .. (c4.south);
  \draw[->,thick] (c6.south) .. controls +(0,-1.00) and +(0,-1.00) .. (c1.south);
  \draw[->,thick] (c7.south) .. controls +(0,-0.35) and +(0,-0.35) .. (c6.south);
  \draw[->,thick] (c8.south) .. controls +(0,-0.35) and +(0,-0.35) .. (c7.south);
  \draw[->,thick] (c9.south) .. controls +(0,-0.60) and +(0,-0.60) .. (c7.south);
  \draw[->,thick] (c10.south) .. controls +(0,-1.70) and +(0,-1.70) .. (c1.south);

  \draw[thick,dashed] (4.25,-2.45) -- (4.95,-2.45);
  \node[anchor=west,font=\scriptsize] at (5.05,-2.45) {NGS};
  \draw[thick] (6.55,-2.45) -- (7.25,-2.45);
  \node[anchor=west,font=\scriptsize] at (7.35,-2.45) {PGS};

\end{tikzpicture}
\caption{NGS and PGS edges for $x=\#aababbaa\$$. Dashed arcs denote next greater suffix edges, while solid arcs denote previous greater suffix edges.}
\label{fig:ngs-arcs}
\end{figure}

Let us denote by
$x_{\lambda^{-1}}(i)$ the word starting at position
$i$ of length $\lambda^{-1}[i]$, 
i.e., the longest inverse Lyndon
word starting at position $i$.
We now prove the relation between
the Inverse Lyndon array and the NGS edge:
$x_{\lambda^{-1}}(i)$ ends
at the starting position of the NGS of $x_i$ minus
the length of the border of $x_{\lambda^{-1}}(i)$,
denoted 
by $\mathrm{border}(x_{\lambda^{-1}}(i))$.

\begin{lemma}[NGS and $\lambda^{-1}$]\label{lem:equiv-inv}
For all $i \in [1,n]$,
$\mathit{next}^{-1}[i]
= i + \lambda^{-1}[i] - \mathrm{border}(x_{\lambda^{-1}}(i))$.
\end{lemma}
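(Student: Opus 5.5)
The plan is to transfer Proposition~\ref{main} to the inverse order by means of Proposition~\ref{InvChar}, and then to read off $\mathit{next}^{-1}[i]$ directly from the resulting sesquipower structure. Fix $i$ and let $z = x_{\lambda^{-1}}(i)$. By Proposition~\ref{InvChar}, $z$ is a nontrivial sesquipower of an anti-Lyndon word. Since $z$ is the longest inverse Lyndon factor starting at $i$, it is the longest nontrivial $\prec_{in}$-sesquipower of an anti-Lyndon word starting at $i$. Applying Theorem~\ref{FundamentalPreneck} to the order $<_{in}$, I would write
\[
z=(uav')^k u, \qquad k\ge 1,
\]
where $uav'$ is anti-Lyndon. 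The letter $b=x[i+|z|]$ satisfies $b\not\le_{in} a$, i.e.\ $a<b$. If instead $z$ reaches the end of the word, then $b=\$$, which is greater than every letter in the inverse setting, so the same inequality $a<b$ holds.

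\textbf{Step 1 (border).} I would show that $\mathrm{border}(z)=|(uav')^{k-1}u|$, so that
\[
|z|-\mathrm{border}(z)=|uav'|.
\]
The point is that an anti-Lyndon word is unbordered, being Lyndon for $\prec_{in}$, and hence primitive. A border of $(uav')^ku$ longer than $(uav')^{k-1}u$ would yield a period shorter than $|uav'|$ that is compatible with the period $|uav'|$. That would produce a nontrivial border of $uav'$, or of a conjugate that is a factor, contradicting unborderedness. This is the step I expect to be the most delicate. I would try a direct overlap argument: a border of length greater than $|(uav')^{k-1}u|$ forces a proper suffix of $uav'$ to equal a prefix of $uav'$.

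\textbf{Step 2 (no greater suffix before $i+|uav'|$).} For $j\in(i,i+|uav'|)$, let $s$ be the proper suffix of $uav'$ starting at $j$. Since $uav'$ is anti-Lyndon, $uav'\prec_{in}s$. Moreover, $s$ is not a prefix of $uav'$, because the word is unbordered; symmetrically, $uav'$ cannot be a prefix of $s$, since it is longer. Hence the two words differ at some position, with the letter of $s$ being $<_{in}$-larger, i.e.\ smaller in $<$. This gives $s\ll uav'$ in the standard order. Lemma~\ref{proplexord}(2) then yields $x_j\ll x_i$, so no $j$ in this range is a candidate for $\mathit{next}^{-1}[i]$.

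\textbf{Step 3 (a greater suffix at $j=i+|uav'|$).} Set $j=i+|uav'|$. The suffix $x_i$ begins with $(uav')^{k-1}u\,a$, while $x_j$ begins with $(uav')^{k-1}u\,b$. Since $a<b$, we get $x_j\succ x_i$.

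Combining Steps 2 and 3 gives $\mathit{next}^{-1}[i]=i+|uav'|$. By Step 1 this equals $i+\lambda^{-1}[i]-\mathrm{border}(z)$, which is the claim. Finally, I would check the boundary conventions: for $i=1$, the sentinel $\#$ makes the whole word inverse Lyndon with border $0$, consistent with $\mathit{next}^{-1}[1]=n+1$; and $i=n$ is handled in the same way.
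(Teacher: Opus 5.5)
Your proof is correct and uses the same decomposition as the paper, $z=(uav')^ku$ with $a<b$, obtained from Proposition~\ref{InvChar} and Theorem~\ref{FundamentalPreneck} in the order $<_{in}$. What differs is how the conclusion is drawn from it.

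The paper argues in one line. With $\lambda^{A}$ the Lyndon array for $\prec_{in}$, it cites Lemma~\ref{lem:equiv-std} for $\mathit{next}^{-1}[i]=i+\lambda^{A}[i]$ and Proposition~\ref{main} for $\lambda^{A}[i]=|uav'|=\lambda^{-1}[i]-\mathrm{border}(z)$. In doing so it tacitly identifies nearest greater suffixes for $\prec$ with nearest smaller suffixes for $\prec_{in}$; this is legitimate because the sentinel $\$$ keeps any suffix from being a proper prefix of another.

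Your Steps~2 and~3 compare suffixes directly in $\prec$ and use $\prec_{in}$ only through the anti-Lyndon property of $uav'$. So you need neither that translation nor the fact that $uav'$ is the longest anti-Lyndon factor at $i$. Your Step~1 also proves item~3 of Proposition~\ref{main} instead of quoting it. Your route is self-contained; the paper's is shorter and presents $\lambda^{-1}$ as the anti-Lyndon array plus a border correction.

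Two small repairs are needed.
\begin{itemize}
\item Maximality of $z$ gives $a\not\le_{in} b$; that is what yields $a<b$. You wrote $b\not\le_{in} a$, which means $b<a$.
\item Step~1 is not delicate. Since $k\ge 1$, $uav'$ is a prefix of $z$, so a period $p<|uav'|$ of $z$ would also be a period of $uav'$ and give it a nonempty border of length $|uav'|-p$. Drop the ``conjugate'' alternative, because conjugates of anti-Lyndon words can be bordered: for $c<d$, $\mathit{ddc}$ is anti-Lyndon while its conjugate $\mathit{dcd}$ is bordered.
\end{itemize}
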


\begin{proof} 
Let $\lambda^{A}$ be the Lyndon array of a word $x = x[1 \ldots n]$
with respect to the inverse lexicographical order.
By Lemma \ref{lem:equiv-std} and Proposition \ref{main}, we have
$$\NE^{-1}[i] = i + \lambda^{A}[i] = i + \lambda^{-1}[i] - \Bo(x_{\lambda^{-1}}(i)).$$
\end{proof}

\begin{lemma}[Border via LCE]\label{lem:border-lce}
Let $z = x_{\lambda^{-1}}(i)$ be the longest inverse Lyndon factor of $x$ starting at $i$
and let $h = \NE^{-1}[i]$. Then
$\lce(i,h) = \Bo(z)$.
\end{lemma}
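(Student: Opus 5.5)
The plan is to make the structure of $z$ explicit through Proposition~\ref{InvChar}, read off $h$ from Lemma~\ref{lem:equiv-inv}, and then compare $x_i$ and $x_h$ symbol by symbol.

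\textbf{Step 1 (structure of $z$).} Let $m=\lambda^{-1}[i]$, so $z=x[i..i+m-1]$. By Proposition~\ref{InvChar}, applied to $z$ and to its one-letter extensions, the longest inverse Lyndon factor starting at $i$ is exactly the longest nontrivial sesquipower, with respect to $\prec_{in}$, of an anti-Lyndon word that starts at $i$. Thus Proposition~\ref{main} applies with the order $<_{in}$. We may write
\[
z=(uav')^ku,
\]
where $uav'$ is anti-Lyndon, $u,v'\in\Sigma^*$, $a\in\Sigma$ and $k\ge 1$. Moreover $|uav'|=m-\Bo(z)$, so $\Bo(z)=|(uav')^{k-1}u|$. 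By Lemma~\ref{lem:equiv-inv}, $h=i+|uav'|$.

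\textbf{Step 2 (the symbol after $z$).} Let $b=x[i+m]$; this is well defined thanks to the sentinel $\$$. Item~1 of Proposition~\ref{main}, read in the inverse order, gives $a>_{in}b$. Equivalently, Theorem~\ref{FundamentalPreneck} in the inverse order says that $zb$ is no longer a sesquipower of this form. The only fact we need is $a\neq b$.

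If $i+m-1=n$, then $b=\$$. Since $\$\notin\Sigma$ while $a\in\Sigma$, we again have $a\neq b$. I would check separately that the sentinel conventions make $\$$ behave as the smallest symbol for $<_{in}$, so that Proposition~\ref{main} is consistent in this case. This edge case is the main point I would double-check.

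\textbf{Step 3 (computing the LCE).} The suffix $x_i$ begins with $(uav')^k u\,b$. The suffix $x_h$ begins at offset $|uav'|$ inside $z$, so it begins with $(uav')^{k-1}u\,b$. Set $p=(uav')^{k-1}u$.
\begin{itemize}
\item Both suffixes start with $p$, so $\lce(i,h)\ge |p|$.
\item In $x_i$, the symbol immediately after $p$ is $a$, because $p$ is followed by $av'u$ inside $z$.
\item In $x_h$, the symbol immediately after $p$ is $b$.
\end{itemize}
Since $a\neq b$, we get $\lce(i,h)=|p|=\Bo(z)$. This also covers $k=1$, where $p=u$.

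The delicate point is Step~1: we must justify that the maximal inverse Lyndon factor coincides with the maximal sesquipower of Proposition~\ref{main}, so that its decomposition is the one with a mismatching next letter. For this I would rely on Proposition~\ref{InvChar} together with the uniqueness of the representation $(uv)^ku$.
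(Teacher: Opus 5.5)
Your proposal is correct and follows essentially the same route as the paper: you write $z=(uav')^ku$ via Proposition~\ref{main} in the inverse order, read off $h=i+|uav'|$, and compare $x_i$ and $x_h$ after their common prefix $(uav')^{k-1}u$, where the next letters are $a$ and $b$ with $a\neq b$. Your explicit appeal to Proposition~\ref{InvChar} to justify using Proposition~\ref{main}, and your check of the sentinel case, only spell out what the paper's two-line proof leaves implicit.
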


\begin{proof}
Let $\widehat{\Sigma}=\Sigma\cup\{\#,\$\}$.
By Proposition \ref{main},
we have $z = (uav')^ku$, $h = i + |uav'|$,
$x_h = (uav')^{k - 1} uby$, with $b > a$ and $y \in \widehat{\Sigma}^*$.
Hence
$\lce(i,h) = |(uav')^{k - 1} u| = \Bo(z).$
\end{proof}

\begin{corollary}\label{cor:recover}
For every $i$,
$
\lambda^{-1}[i] = \mathit{next}^{-1}[i] - i + \mathrm{\lce}(i,\mathit{next}^{-1}[i]).
$
\end{corollary}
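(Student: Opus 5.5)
The corollary follows by combining Lemma~\ref{lem:equiv-inv} with Lemma~\ref{lem:border-lce}; no new combinatorial argument is needed. I would fix an arbitrary $i \in [1,n]$ and write $z = x_{\lambda^{-1}}(i)$ and $h = \mathit{next}^{-1}[i]$.

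\textbf{Step 1.} Lemma~\ref{lem:equiv-inv} gives $h = i + \lambda^{-1}[i] - \Bo(z)$. Rearranging, $\lambda^{-1}[i] = h - i + \Bo(z)$.

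\textbf{Step 2.} Lemma~\ref{lem:border-lce}, applied to the same $i$ and $h$, gives $\Bo(z) = \lce(i,h)$. Substituting this into the expression from Step~1 yields
\[
\lambda^{-1}[i] = \mathit{next}^{-1}[i] - i + \lce(i,\mathit{next}^{-1}[i]),
\]
which is the claim.

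The only point needing care is the boundary positions. At $i=1$ we have $\mathit{next}^{-1}[1] = n+1$, and at $i=n$ we have $\mathit{next}^{-1}[n] = n+1$. In both cases the right-hand side must be read with $\lce(i,n+1)=0$, meaning position $n+1$ lies past the end of the word and shares no prefix with $x_i$.

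\textbf{Case $i=1$.} Since $x[1] = \#$ is the maximal symbol and does not occur again, every proper suffix of $x$ is smaller than $x$. Hence $x$ itself is an unbordered inverse Lyndon word, so $\lambda^{-1}[1] = n$, in agreement with $n+1-1+0$.

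\textbf{Case $i=n$.} Here $\lambda^{-1}[n] = 1 = (n+1) - n + 0$.

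For interior $i$, the set defining $\mathit{next}^{-1}[i]$ always contains a genuine NGS. This follows from the sentinel convention: $\$$ is larger than every letter of $\Sigma$, so $x_n = \$ \succ x_i$ for every $i \in (1,n)$. Thus Lemmas~\ref{lem:equiv-inv} and~\ref{lem:border-lce} apply directly, and the argument above goes through verbatim.
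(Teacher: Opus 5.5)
Your proof is correct and follows the paper's route: the paper gives no separate proof of the corollary, which is simply the substitution of Lemma~\ref{lem:border-lce} ($\lce(i,\mathit{next}^{-1}[i])=\Bo(z)$) into Lemma~\ref{lem:equiv-inv}, exactly as in your Steps 1 and 2. Your check of the endpoints $i=1$ and $i=n$, reading $\lce(i,n+1)=0$, is a correct clarification that the paper leaves implicit.
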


Two additional examples illustrating the border correction, including the case of an overlapping border, are provided in Appendix~\ref{sec:further-border-examples}.

\begin{remark}[Why the border term is structural]\label{rem:structural}
The correction term in Corollary~\ref{cor:recover} is not a post-processing detail. Without Lemma~\ref{lem:border-lce}, the inverse array would require explicit border information for each longest inverse Lyndon factor, introducing an additional computational layer not present in the standard case. The point of Lemma~\ref{lem:border-lce} is precisely that the border is already encoded by the same nearest-suffix LCE information computed by the algorithm (see
[Algorithm 1 in \cite{lyndon-simple}] for detailed description of the computing LCE array using two arrays, $\mathit{nlce}$ and $\mathit{plce}$,
such that $\mathit{plce}[i]=\mathrm{\lce}(\mathit{prev}[i],i)$ and
$\mathit{nlce}[i]=\mathrm{\lce}(i,\mathit{next}[i])$ i.e., the border array in our framework). Thus, computing $\mathit{next}^{-1}$ together with the associated $\mathit{\lce}$ array suffices to recover $\lambda^{-1}$. 
\end{remark}

\subsection{Combinatorial Properties of NGS/PGS Edges}

We now show that the main structural lemmas from the NSS/PSS setting remain valid in the inverse case. These lemmas also fit into the broader context of non-crossing LCE-based techniques on general ordered alphabets. Here, however, the non-crossing structure is induced specifically by NGS/PGS edges in the inverse Lyndon setting. The proofs of Lemmas \ref{lem:inv-no-crossing}
and \ref{lem:inv-chain} are similar to the ones in
\cite{lyndon-simple} and are reported in the Appendix~\ref{sec:technicalproofs}
for the sake of completeness.

\begin{lemma}[Non-Crossing of NGS/PGS Edges]\label{lem:inv-no-crossing}
Let $l_1 < r_1$ and $l_2 < r_2$ be index pairs, each connected by an NGS or PGS edge. Then it is impossible to have $l_1 < l_2 < r_1 < r_2$.
\end{lemma}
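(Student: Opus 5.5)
The plan is to reduce the statement to Lemma~\ref{lem:no-crossing} by a change of order. By Definition~\ref{ILO}, $x_j \succ x_i$ in the original lexicographic order holds exactly when $x_j \prec_{in} x_i$, as long as neither suffix is a proper prefix of the other. Because of the sentinels, every suffix of $x=\#x(1..n)\$$ ends with the unique symbol $\$$, so no suffix is a proper prefix of another. Hence the comparison between two distinct suffixes is always decided by a mismatching letter, and reversing the letter order exactly reverses the comparison.

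Consequently, $\mathit{next}^{-1}$ and $\mathit{prev}^{-1}$ coincide with the NSS and PSS arrays $\mathit{next}$ and $\mathit{prev}$ computed with respect to $\prec_{in}$. Some care is needed with the sentinels: under $<_{in}$ we have $\# <_{in} \$ <_{in} a$, which is precisely the standard-setting convention. Therefore $x_1$ is the minimum under $\prec_{in}$, and the boundary conventions $\mathit{next}^{-1}[1]=\mathit{next}^{-1}[n]=n+1$, $\mathit{prev}^{-1}[1]=0$, $\mathit{prev}^{-1}[n]=1$ match those of Definition~\ref{def:nss-pss}. With this identification, the NGS/PGS edges of $x$ are exactly the NSS/PSS edges of $x$ over $(\Sigma\cup\{\#,\$\},<_{in})$, and Lemma~\ref{lem:no-crossing} applies verbatim.

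For completeness, since the paper wants a self-contained argument, I would also give the direct proof. Suppose $l_1<l_2<r_1<r_2$, and split into four cases according to whether each pair is an NGS edge ($\mathit{next}^{-1}[l]=r$) or a PGS edge ($\mathit{prev}^{-1}[r]=l$).

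In the NGS/NGS case, $\mathit{next}^{-1}[l_1]=r_1$ and $l_2\in(l_1,r_1)$ give $x_{l_2}\prec x_{l_1}\prec x_{r_1}$. Then $r_1\in(l_2,r_2)$ with $x_{r_1}\succ x_{l_2}$ contradicts the minimality of $r_2$.

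In the PGS/PGS case, $x_{r_1}\prec x_{l_2}$, since $l_2\in(l_1,r_1)$ and $l_1$ is the maximal greater suffix position before $r_1$. We also have $x_{l_2}\succ x_{r_2}$, and $r_1\in(l_2,r_2)$ gives $x_{r_1}\prec x_{r_2}$. Chaining these yields $x_{r_1}\prec x_{r_2}\prec x_{l_2}$, so $r_1$ beats $l_2$ as a greater suffix for $r_2$, contradicting maximality.

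The mixed cases are similar: each time, the inner index of one edge lies strictly inside the other edge and yields a closer candidate. I expect the main obstacle to be getting the strict inequalities right in these mixed cases, which is where distinctness of suffixes (guaranteed by the sentinels) is used. If any mixed case resists, I would fall back on the order-reversal reduction above, which handles all cases uniformly.
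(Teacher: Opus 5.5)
Your main argument, the order-reversal reduction, is correct and complete. Because $\$$ occurs only at the last position, no suffix of $x$ is a proper prefix of another. So for $i\neq j$ we have $x_j\succ x_i$ if and only if $x_j\prec_{in} x_i$. Under $<_{in}$ the sentinels satisfy $\#<_{in}\$<_{in}a$, which is the standard convention, and the boundary values in Definition~\ref{def:ngs-pgs} coincide with those in Definition~\ref{def:nss-pss}. Hence $\mathit{next}^{-1}$ and $\mathit{prev}^{-1}$ are exactly the NSS/PSS arrays of $x$ over $(\Sigma\cup\{\#,\$\},<_{in})$, and Lemma~\ref{lem:no-crossing}, which holds for any ordered alphabet, gives the claim.

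This is a different route from the paper's, which argues directly and without a case split. For every NGS or PGS edge $(l,r)$, each suffix $x_m$ with $m\in(l,r)$ is smaller than both endpoints. If $\mathit{next}^{-1}[l]=r$ then $x_m\prec x_l\prec x_r$, and if $\mathit{prev}^{-1}[r]=l$ then $x_m\prec x_r\prec x_l$. Applying this to $l_2\in(l_1,r_1)$ gives $x_{l_2}\prec x_{r_1}$, and applying it to $r_1\in(l_2,r_2)$ gives $x_{r_1}\prec x_{l_2}$, a contradiction. Your reduction is economical and makes explicit the duality the paper itself uses in the proof of Lemma~\ref{lem:equiv-inv}. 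However, it rests on the cited lemma and on checking the sentinel conventions, whereas the paper's argument is self-contained and two lines long.

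Your supplementary direct proof is wrong as written in the PGS/PGS case. From $\mathit{prev}^{-1}[r_1]=l_1$ and $l_2\in(l_1,r_1)$, maximality of $l_1$ gives $x_{l_2}\not\succ x_{r_1}$, that is, $x_{l_2}\prec x_{r_1}$, not $x_{r_1}\prec x_{l_2}$. With your reversed inequality, the chain $x_{r_1}\prec x_{r_2}\prec x_{l_2}$ is consistent and no contradiction follows. In fact $x_{r_1}\prec x_{r_2}$ says that $r_1$ is \emph{not} a greater suffix for $r_2$, so ``$r_1$ beats $l_2$'' does not follow. With the correct direction you get $x_{r_1}\succ x_{l_2}\succ x_{r_2}$, which contradicts $x_{r_1}\prec x_{r_2}$, i.e.\ the maximality of $l_2$. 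The two mixed cases do go through as you sketch. Still, the cleanest way to avoid such sign slips is to isolate the uniform ``interior suffixes are smaller than both endpoints'' observation, which handles all four cases at once.
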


\begin{lemma}[Chain Iteration for NGS/PGS]\label{lem:inv-chain}
For any $\ell, r \in [1,n]$:
\begin{enumerate}
\item $\mathit{prev}^{-1}[r] = (\mathit{prev}^{-1})^*[r{-}1]$;
\item $\mathit{next}^{-1}[\ell] = r$ if and only if $\ell = (\mathit{prev}^{-1})^*[r{-}1]$ and $\ell > \mathit{prev}^{-1}[r]$.
\end{enumerate}
\end{lemma}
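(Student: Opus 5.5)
The simplest route is to observe that $\mathit{next}^{-1}$ and $\mathit{prev}^{-1}$ are exactly the NSS and PSS arrays of $x$ with respect to the inverse order $\prec_{in}$ of Definition~\ref{ILO}. Then Lemma~\ref{lem:chain} applied to $\prec_{in}$ yields the statement at once. The only thing to check is that the sentinel conventions match. Reversing the order of $\Sigma\cup\{\#,\$\}$ turns $\#>\$>a$ into $\#<_{in}\$<_{in}a$, which is Ellert's standard convention. Moreover, for suffixes of a sentinel-terminated word no suffix is a proper prefix of another, so $x_j\succ x_i$ iff $x_j\prec_{in}x_i$. Hence the sets defining $\mathit{next}^{-1},\mathit{prev}^{-1}$ coincide with those defining $\mathit{next},\mathit{prev}$ for $\prec_{in}$, including the boundary values. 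Nevertheless, I would also give a direct proof in the spirit of the Appendix.

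\textbf{Part 1.} Set $p=\mathit{prev}^{-1}[r]$, so $p<r$. If $p=r-1$ there is nothing to show, since $e=0$ works. Otherwise $p<r-1$. I follow the chain $r-1,\ \mathit{prev}^{-1}[r-1],\dots$, which is strictly decreasing and reaches $1$. I claim every chain element $j$ with $j>p$ satisfies $x_j\preceq x_r$, i.e.\ $x_j\prec x_r$ because suffixes are distinct. This holds by maximality of $p$, since every index in $(p,r)$ has a suffix smaller than $x_r$. Next I show the chain cannot jump over $p$. Suppose $j>p$ is on the chain and $\mathit{prev}^{-1}[j]<p$. Then $x_p\prec x_j$ by definition of $\mathit{prev}^{-1}[j]$. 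But $x_p\succ x_r\succ x_j$, a contradiction. Hence the chain lands exactly on $p$. Equivalently, this step is Lemma~\ref{lem:inv-no-crossing} applied to the edge $(p,r)$ and the edge $(\mathit{prev}^{-1}[j],j)$.

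\textbf{Part 2, forward direction.} Assume $\mathit{next}^{-1}[\ell]=r$. Then $x_j\prec x_\ell$ for all $j\in(\ell,r)$ and $x_r\succ x_\ell$. So $\mathit{prev}^{-1}[r]<\ell$, since no index in $[\ell,r)$ has a suffix larger than $x_r$ except possibly ones before $\ell$. To see that $\ell$ lies on the chain from $r-1$, note that for every $j\in(\ell,r-1]$ we have $\mathit{prev}^{-1}[j]\ge\ell$, because $x_\ell\succ x_j$. So the chain starting at $r-1$ stays in $[\ell,r-1]$ until it hits $\ell$.

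\textbf{Part 2, converse.} Assume $\ell=(\mathit{prev}^{-1})^*[r-1]$ and $\ell>\mathit{prev}^{-1}[r]$. By maximality of $\mathit{prev}^{-1}[r]$, we get $x_\ell\prec x_r$. It remains to show $x_j\prec x_\ell$ for $j\in(\ell,r)$. Let $\ell=j_t<\dots<j_0=r-1$ be the chain. Each step gives $x_{j_{s+1}}\succ x_{j_s}$ and $x_{j_{s+1}}\succ x_m$ for all $m\in(j_{s+1},j_s)$, again by maximality of $\mathit{prev}^{-1}$. Chaining these relations, $x_\ell$ exceeds every suffix in $(\ell,r)$. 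Therefore $\mathit{next}^{-1}[\ell]=r$.

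The main obstacle is the boundary conventions $\mathit{prev}^{-1}[n]=1$ and $\mathit{next}^{-1}[1]=n+1$. These must be checked against the sentinel ordering, where $x_1$ is maximal, and I would handle them as separate trivial cases.
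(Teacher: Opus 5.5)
Your proposal is correct, and it contains two proofs. The direct one follows the paper's appendix proof closely.

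In Part~1 the paper argues by contradiction. It takes a chain element $r'$ with $\mathit{prev}^{-1}[r]\in(\mathit{prev}^{-1}[r'],r')$ and invokes Lemma~\ref{lem:inv-no-crossing}. You inline exactly that non-crossing argument ($x_p\prec x_j$ against $x_p\succ x_r\succ x_j$), as you note yourself.

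In the forward direction of Part~2 the paper again uses non-crossing, via a chain element $r'$ with $\mathit{prev}^{-1}[r']<\ell<r'<r$. You instead observe directly that $\mathit{prev}^{-1}[j]\ge\ell$ for every $j\in(\ell,r-1]$, so the strictly decreasing chain cannot pass below $\ell$. This is slightly more elementary and avoids the contradiction.

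Your converse is the paper's argument. The difference is that you make explicit the ``chain property'' the paper leaves implicit, by partitioning $(\ell,r-1]$ into the intervals $(j_{s+1},j_s]$.

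The genuinely different route is your opening reduction. Since every suffix ends with the unique $\$$, none is a proper prefix of another, so $x_j\succ x_i$ iff $x_j\prec_{in}x_i$. Hence $\mathit{next}^{-1}$ and $\mathit{prev}^{-1}$, boundary conventions included, are literally the NSS/PSS arrays for the alphabet ordered by $<_{in}$, where $\#<_{in}\$<_{in}a$ is Ellert's standard convention. This makes the statement an immediate corollary of Lemma~\ref{lem:chain}.

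The same observation would also give Lemmas~\ref{lem:inv-no-crossing} and~\ref{lem:inv-lce-accel} directly from their standard counterparts, since lce values do not depend on the order. It also explains why Algorithm~\ref{alg:lce-ngs} differs from LCE-NSS only in the flipped comparison. The paper uses this very reduction in the proof of Lemma~\ref{lem:equiv-inv}, but not here.

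What the reduction costs is self-containment: it rests on Lemma~\ref{lem:chain}, which the paper only cites. Your direct proof, by contrast, needs nothing beyond the definitions and the sentinel ordering.
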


\begin{lemma}[LCE Acceleration for NGS/PGS]\label{lem:inv-lce-accel}
Let $\ell = \mathit{prev}^{-1}[k]$ and $r = \mathit{next}^{-1}[k]$. Then:
\begin{enumerate}
\item if $\mathrm{\lce}(\ell,k) = \mathrm{\lce}(k,r)$, then $\mathrm{\lce}(\ell,r) \ge \mathrm{lce}(k,r)$, and either $\mathit{prev}^{-1}[r] = \ell$ or $\mathit{next}^{-1}[\ell] = r$;
\item if $\mathrm{\lce}(\ell,k) < \mathrm{\lce}(k,r)$, then $\mathrm{\lce}(\ell,r) = \mathrm{\lce}(\ell,k)$ and $\mathit{prev}^{-1}[r] = \ell$;
\item if $\mathrm{\lce}(\ell,k) > \mathrm{\lce}(k,r)$, then $\mathrm{\lce}(\ell,r) = \mathrm{\lce}(k,r)$ and $\mathit{next}^{-1}[\ell] = r$.
\end{enumerate}
\end{lemma}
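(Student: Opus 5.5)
The plan is to transfer Lemma~\ref{lem:lce-accel} through order reversal, and in any case to give a direct argument along the lines of Ellert's proof. Under the inverse order $\prec_{in}$ we have $x_j \succ x_i$ iff $x_j \prec_{in} x_i$, with the sentinel convention $\# > \$ > a$ matching the standard one. Hence $\mathit{next}^{-1}$ and $\mathit{prev}^{-1}$ are exactly the NSS/PSS arrays of $x$ with respect to $\prec_{in}$. The value $\lce$ depends only on equality of letters, so it is unchanged. Lemma~\ref{lem:lce-accel} applied to $(\Sigma,<_{in})$ then gives all three items verbatim.

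For completeness, and because Lemma~\ref{lem:lce-accel} is quoted with $k\in(1,n)$, I will also sketch the direct argument. Fix $k$ with $\ell=\mathit{prev}^{-1}[k]<k<r=\mathit{next}^{-1}[k]$, so that $x_\ell\succ x_k$ and $x_r\succ x_k$. Thanks to the sentinels, no suffix is a prefix of another, so both comparisons are decided by mismatches. Write $p=\lce(\ell,k)$ and $q=\lce(k,r)$, so that $x_\ell[p+1]>x_k[p+1]$ and $x_r[q+1]>x_k[q+1]$. The three items then split by comparing $p$ and $q$.

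If $p<q$, then $x_r$ agrees with $x_k$ on the first $p+1$ letters. So $x_\ell$ and $x_r$ share exactly $p$ letters, and $x_\ell[p+1]>x_k[p+1]=x_r[p+1]$. This gives $\lce(\ell,r)=p$ and $x_\ell\succ x_r$. To show $\mathit{prev}^{-1}[r]=\ell$, I must exclude any $j\in(\ell,r)$ with $x_j\succ x_r$.
\begin{itemize}
\item For $j\in(\ell,k)$, such a $j$ would satisfy $x_j\succ x_r\succ x_k$, contradicting that $\ell$ is the previous greater suffix of $k$.
\item For $j=k$, we have $x_k\prec x_r$ by the choice of $r$.
\item For $j\in(k,r)$, we have $x_j\prec x_k\prec x_r$ by minimality of $r$.
\end{itemize}

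The case $p>q$ is symmetric. Here $\lce(\ell,r)=q$ and $x_r[q+1]>x_k[q+1]=x_\ell[q+1]$, so $x_r\succ x_\ell$. To get $\mathit{next}^{-1}[\ell]=r$, I must show $x_j\prec x_\ell$ for $j\in(\ell,r)$.
\begin{itemize}
\item For $j\in(\ell,k)$, we have $x_j\preceq x_k\prec x_\ell$, since $\ell$ is the nearest previous greater suffix of $k$.
\item For $j\in(k,r)$, we have $x_j\prec x_k\prec x_\ell$.
\end{itemize}

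If $p=q$, then $x_\ell$ and $x_r$ both agree with $x_k$ on the first $p$ letters, so $\lce(\ell,r)\ge q$. They are distinct suffixes, so one is greater than the other. If $x_\ell\succ x_r$, the argument of the first case gives $\mathit{prev}^{-1}[r]=\ell$. Otherwise, the argument of the second case gives $\mathit{next}^{-1}[\ell]=r$.

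The main obstacle is the boundary cases $k=1$ and $k=n$, together with the conventional values $\mathit{prev}^{-1}[1]=0$ and $\mathit{next}^{-1}[n]=n+1$. I will restrict to $k\in(1,n)$ as in Lemma~\ref{lem:lce-accel}, where $\ell\ge1$ and $r\le n+1$ index genuine suffixes, treating position $n+1$ as the empty or sentinel suffix. I will check that the sentinel ordering keeps every comparison strict.
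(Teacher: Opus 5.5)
Your proposal is correct, and its direct part is essentially the paper's proof: you split on $\lce(\ell,k)$ versus $\lce(k,r)$, locate the first mismatch between $x_\ell$ and $x_r$ through $x_k$, and then use the nearest-greater properties of $\ell$ and $r$ on $(\ell,r)$. Your order-reversal shortcut is also sound, because the unique end sentinel makes the suffixes prefix-free, so $x_j \succ x_i$ agrees with $x_j \prec_{in} x_i$ and $\mathit{next}^{-1},\mathit{prev}^{-1}$ are exactly the NSS/PSS arrays for $<_{in}$ (under which $\# <_{in} \$ <_{in} a$).
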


\begin{proof}
For the first case, since $\ell = \mathit{prev}^{-1}[k]$ and $r = \mathit{next}^{-1}[k]$, we have $x_{\ell} \succ x_k \succ x_m$ and $x_r \succ x_k \succ x_m$ for all $m \in (\ell,r) \setminus \{k\}$. Since $\mathrm{\lce}(\ell,k) = \mathrm{\lce}(k,r)$, it follows that $\mathrm{\lce}(\ell,r) \ge \mathrm{\lce}(k,r)$. If $x_{\ell} \succ x_r$, then 
$x_{\ell} \succ x_r \succ x_m$ for all such $m$, so $\mathit{prev}^{-1}[r] = \ell$. By symmetry, if $x_r \succ x_{\ell}$, then $\mathit{next}^{-1}[\ell] = r$.

For the second case, let $c = \mathrm{\lce}(\ell,k)$. Since $x_{\ell} \succ x_k$, we have $x[\ell+c] > x[k+c]$. Because $\mathrm{\lce}(\ell,k) < \mathrm{\lce}(k,r)$, the suffix $x_r$ shares a prefix of length at least $c+1$ with $x_k$, hence $x[r+c] = x[k+c]$. Thus $x_{\ell}$ and $x_r$ agree for $c$ positions and differ at the next one with 
$x[\ell+c] > x[r+c]$, so $x_{\ell} \succ x_r$ and $\mathrm{\lce}(\ell,r) = c$. Since $x_r \succ x_m$ for all $m \in (\ell,r)$, it follows that $\mathit{prev}^{-1}[r] = \ell$.

The third case is symmetric, yielding $x_r \succ x_{\ell}$, $\mathrm{\lce}({\ell},r) = \mathrm{\lce}(k,r)$, and $\mathit{next}^{-1}[\ell] = r$.
\end{proof}

\subsection{The LCE-NGS Algorithm}

Lemmas~\ref{lem:inv-chain} and~\ref{lem:inv-lce-accel} yield Algorithm~\ref{alg:lce-ngs}, the inverse counterpart of the LCE-NSS algorithm~\cite{lyndon-simple}. The only difference is the while-loop comparison: we test $x[\ell+m] < x[r+m]$ to find the first lexicographically greater suffix to the right.

\begin{algorithm}[h]
\caption{LCE-NGS for the Inverse Lyndon array}
\label{alg:lce-ngs}
\DontPrintSemicolon
\SetAlgoLined
\SetInd{0.4em}{0.9em}

\KwIn{word $x[1..n]$ in inverse sentinel mode}
\KwOut{$\mathit{next}^{-1}$, $\mathit{prev}^{-1}$, $\mathit{nlce}$, $\mathit{plce}$, $\lambda^{-1}$}

$\mathit{prev}^{-1}[1] \gets 0$\;
$\mathit{next}^{-1}[n] \gets n+1$\;
$\mathit{plce}[1] \gets 0$\;

\For{$r \gets 2$ \KwTo $n$}{
  $\ell \gets r-1$\;
  $m \gets \proc{Smart-Lce}(\ell,r,0)$\;

  \While{$x[\ell+m] < x[r+m]$}{
    $\mathit{next}^{-1}[\ell] \gets r$\;
    $\mathit{nlce}[\ell] \gets m$\;

    \uIf{$m = \mathit{plce}[\ell]$}{
      $m \gets \proc{Smart-Lce}(\mathit{prev}^{-1}[\ell],r,m)$\;
    }
    \uElseIf{$m > \mathit{plce}[\ell]$}{
      $m \gets \mathit{plce}[\ell]$\;
    }

    $\ell \gets \mathit{prev}^{-1}[\ell]$\;
  }

  $\mathit{prev}^{-1}[r] \gets \ell$\;
  $\mathit{plce}[r] \gets m$\;
}

\For{$i \gets 1$ \KwTo $n$}{
  $\lambda^{-1}[i] \gets \mathit{next}^{-1}[i] - i + \mathit{nlce}[i]$\;
}
\end{algorithm}

The procedure \proc{Smart-Lce} is the amortized primitive of~\cite{lyndon-simple}, here applied only to NGS/PGS candidates. Its proof relies on two invariants: $\mathit{rhs}$ is the rightmost inspected position, and each fixed edge stores its exact LCE in $\mathit{nlce}$ or $\mathit{plce}$. Thus, if $r+m<\mathit{rhs}$, Lemma~\ref{lem:inv-lce-accel} retrieves the answer in $O(1)$ time; otherwise, \proc{Smart-Lce} scans further, updates $\mathit{rhs}$, and stores the new LCE. Only this latter case performs new character comparisons.

\begin{theorem}\label{thm:linear}
Algorithm~\ref{alg:lce-ngs} computes the Inverse Lyndon array $\lambda^{-1}[1..n]$ in $O(n)$ time and $O(n)$ space on general ordered alphabets.
\end{theorem}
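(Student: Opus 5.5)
The proof splits into correctness and complexity, and each reduces to earlier results plus Ellert's amortization argument for \proc{Smart-Lce}.

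\textbf{Correctness.} I would prove by induction on $r$ the invariant that, after iteration $r$ of the outer loop:
\begin{itemize}
\item $\mathit{prev}^{-1}[r']$ and $\mathit{plce}[r']$ are correct for all $r'\le r$;
\item for every $\ell<r$ whose NGS edge lands at most at $r$, $\mathit{next}^{-1}[\ell]$ and $\mathit{nlce}[\ell]=\lce(\ell,\mathit{next}^{-1}[\ell])$ are correct.
\end{itemize}

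In iteration $r$, the while-loop walks the chain $r-1,\mathit{prev}^{-1}[r-1],\dots$. By Lemma~\ref{lem:inv-chain}(1) this chain contains $\mathit{prev}^{-1}[r]$. By Lemma~\ref{lem:inv-chain}(2) the indices visited before it are exactly those $\ell$ with $\mathit{next}^{-1}[\ell]=r$.

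The loop test $x[\ell+m]<x[r+m]$ decides $x_\ell\prec x_r$ correctly provided $m=\lce(\ell,r)$ at each test. This holds because the sentinels guarantee that a mismatch always exists, so $\lce$ is never the whole suffix. I would establish $m=\lce(\ell,r)$ as a loop invariant. Initially this is ensured by \proc{Smart-Lce}. When the loop passes from $k=\ell$ to $\mathit{prev}^{-1}[k]$, we have $\mathit{next}^{-1}[k]=r$ and $\mathit{plce}[k]=\lce(\mathit{prev}^{-1}[k],k)$. The three branches of the algorithm then correspond exactly to the three cases of Lemma~\ref{lem:inv-lce-accel}:
\begin{itemize}
\item in the case $\mathit{plce}[k]<m$, the new value is $\mathit{plce}[k]$;
\item in the case $\mathit{plce}[k]>m$, $m$ is unchanged;
\item in the equality case, $\lce\ge m$, so \proc{Smart-Lce} may resume from offset $m$.
\end{itemize}
On exit, $\ell=\mathit{prev}^{-1}[r]$ and $m=\lce(\ell,r)$, so $\mathit{plce}[r]$ is set correctly. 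Finally, Corollary~\ref{cor:recover} (that is, Lemmas~\ref{lem:equiv-inv} and~\ref{lem:border-lce}) gives $\lambda^{-1}[i]=\mathit{next}^{-1}[i]-i+\mathit{nlce}[i]$, which is exactly the last loop.

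\textbf{Complexity.} Each index enters the chain and is popped, i.e.\ receives its NGS value, at most once. So the total number of while-iterations over all $r$ is at most $n-1$ plus one failing test per $r$, which is $O(n)$. All operations other than character comparisons inside \proc{Smart-Lce} are therefore $O(n)$.

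For \proc{Smart-Lce}, I would reuse Ellert's potential argument verbatim. Each successful character comparison advances the global pointer $\mathit{rhs}$, which only increases and is bounded by $n+1$. Each call performs at most one failing comparison, and there are $O(n)$ calls. Space is clearly a constant number of arrays of length $n$.

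\textbf{Main obstacle.} The delicate step is justifying that \proc{Smart-Lce}, when called with $r+m<\mathit{rhs}$, can answer in $O(1)$ from stored $\mathit{nlce}/\mathit{plce}$ values. This requires the non-crossing structure: I would invoke Lemma~\ref{lem:inv-no-crossing} to show that the previously fixed edge covering position $r+m$ yields, through Lemma~\ref{lem:inv-lce-accel}, either the exact LCE or a reduction to a scan starting at $\mathit{rhs}$. Ellert's proof uses only the order-independent facts that edges do not cross and that the stored LCE values are exact, so I expect it to transfer once the inequalities are flipped.
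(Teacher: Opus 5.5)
Your proposal is correct and follows essentially the same route as the paper. Correctness comes from Lemmas~\ref{lem:inv-chain} and~\ref{lem:inv-lce-accel} together with Corollary~\ref{cor:recover}, so that $\mathit{nlce}$ doubles as the border correction. Linearity comes from each index receiving its $\mathit{next}^{-1}$ value once, plus Ellert's monotone-$\mathit{rhs}$ amortization of \proc{Smart-Lce}; the paper, like you, only asserts that the $O(1)$ behaviour inside the window carries over to NGS/PGS and does not reprove it. Your induction invariant and your matching of the three branches to the three cases of Lemma~\ref{lem:inv-lce-accel} make explicit what the paper compresses into a citation of the lemmas.
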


\begin{proof}
Correctness follows from Lemmas~\ref{lem:equiv-inv}, \ref{lem:border-lce}, \ref{lem:inv-no-crossing}, \ref{lem:inv-chain}, and~\ref{lem:inv-lce-accel}. The additional point to justify is that the border correction does not create extra rescanning.

By Lemma~\ref{lem:border-lce}, whenever $\mathit{next}^{-1}[\ell]=r$ is fixed, the same value $\mathit{nlce}[\ell]=\mathrm{\lce}(\ell,r)$ is both the edge LCE needed by the algorithm and the border correction needed later for $\lambda^{-1}[\ell]$. Therefore the algorithm never computes borders explicitly and never performs a separate pass for border information.

Moreover, if \proc{Smart-Lce} is called with $r+m<\mathit{rhs}$, then the query lies entirely inside a region already certified by a previous explicit scan. In that case, Lemma~\ref{lem:inv-lce-accel} and the stored $\mathit{nlce}$ or $\mathit{plce}$ values suffice to answer the query without rescanning characters inside that window. Thus borders cannot trigger hidden rescanning there. Every explicit character comparison strictly advances the global frontier $\mathit{rhs}$, which is monotone and never exceeds $n$. Hence the total number of explicit comparisons is $O(n)$.

The outer loop performs $n-1$ iterations. Each index receives its final $\mathit{next}^{-1}$ value once, each stored LCE value is written once, and the final recovery formula
$
\lambda^{-1}[i] = \mathit{next}^{-1}[i]-i+\mathit{nlce}[i]
$
is applied once per position. Therefore, the total running time is $O(n)$ and the space usage is $O(n)$.
\end{proof}

\section{From the Inverse Lyndon array to $\ICFL$}
\label{sec:icfl-recovery}

In this section, we show that the Inverse Lyndon array can be used to reconstruct the \emph{canonical inverse Lyndon factorization} ($\ICFL$) of a word in linear time.

\begin{definition}\cite{inverse-lyndon}
An inverse Lyndon factorization of a word $w \in \Sigma^+$ is a sequence
$(m_1, \ldots,m_k)$ of inverse Lyndon words such that $m_1 \cdots m_k = w$ and $m_i \prec m_{i+1}$, 
$m_i \neq m_{i+1}$,
$1 \leq i \leq k-1$.
\end{definition}

Given a word, its inverse Lyndon factorization
is not unique and so in \cite{inverse-lyndon}
the $\ICFL$ is also recursively defined. However,
here we report a simpler characterization,
provided in \cite{inverse-lyndon-2}.

\begin{proposition}
Let $w \in \Sigma^+$. The \emph{canonical inverse Lyndon factorization} of $w$,
denoted by $\ICFL(w)$, is the unique inverse Lyndon factorization
$\ICFL(w) = (m_1,\dots,m_k)$
such that each $m_\ell$ is an inverse Lyndon word and the factorization satisfies
the {\em border property}, namely no nonempty border of $m_\ell$ is a prefix of
$m_{\ell+1}$, for every $\ell=1,\dots,k-1$.
\end{proposition}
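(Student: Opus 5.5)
The plan is to prove existence and uniqueness separately. For existence I will use the recursively defined factorization $\ICFL(w)$ of \cite{inverse-lyndon}. That it is an inverse Lyndon factorization is proved there, and that it satisfies the border property is the border theorem of \cite{lata2020,tcs2021}, which I take as given. If a self-contained argument is preferred, I would instead argue by induction on $|w|$, building the factorization from right to left as in \cite{inverse-lyndon-2}: given $\ICFL$ of a proper suffix, decide whether the next letters to the left start a new factor or merge with the first factor. Lemma~\ref{proplexord} is used to keep the order condition $m_\ell \prec m_{\ell+1}$, $m_\ell\neq m_{\ell+1}$.

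The core of the proof is uniqueness, by induction on $|w|$. Let $(m_1,\dots,m_k)$ and $(m'_1,\dots,m'_h)$ be two inverse Lyndon factorizations of $w$ with the border property. It suffices to show $m_1=m'_1$, since the tails are then factorizations of the same shorter word that still satisfy all the conditions. Assume for contradiction that $|m_1|<|m'_1|$, so $m'_1=m_1y$ with $y\neq\varepsilon$. Since $m'_1$ is inverse Lyndon, its proper suffix $y$ satisfies $y\prec m_1y$. On the other side, $y$ is a prefix of $m_2\cdots m_k$ and $m_1\prec m_2$ with $m_1\neq m_2$. I will split on how $m_1$ compares with $m_2$.

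\emph{Case $m_1\ll m_2$.} By Lemma~\ref{proplexord}(2), $m_1u\ll m_2v$ for all $u,v$. Since $y$ and $m_2$ are prefix-comparable (both are prefixes of $m_2\cdots m_k$), I want to conclude $y\succ m_1y$, contradicting $y\prec m'_1$. This works directly when $|y|$ exceeds the length of the common prefix of $m_1$ and $m_2$. When $y$ is too short for that, $y$ is a prefix of $m_1$ and hence a border of $m'_1$. I then need to exclude this by the border property of the primed factorization, applied to $m'_1$ and $m'_2$. This step also uses $m_1 y\prec m'_2$, together with the fact that $m'_2$ starts inside $m_2\cdots m_k$.

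\emph{Case $m_1$ a proper prefix of $m_2$.} By Proposition~\ref{InvChar}, $m_1$ is a nontrivial sesquipower $(uav')^ju$ of an anti-Lyndon word. I will use Theorem~\ref{FundamentalPreneck}, read in the inverse order, to track how $m_1y$ can extend this sesquipower while remaining inverse Lyndon. The goal is to show that some nonempty border of $m_1$ (namely $(uav')^{j-1}u$, or $m_1$'s own period structure) must then be a prefix of $m_2$, contradicting the border property of the unprimed factorization.

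I expect the main obstacle to be the bookkeeping in these two cases: one must keep track of where the factor boundaries of the other factorization fall inside $y$, and make sure every sub-case ends in a contradiction with one of the two border properties and not merely with the order conditions. I would first try to isolate a single auxiliary lemma that handles both cases at once: if $p$ is an inverse Lyndon word, $p\prec q$, $p\neq q$, and no nonempty border of $p$ is a prefix of $q$, then no inverse Lyndon word has the form $pz$ with $z$ a nonempty prefix of $q\cdots$. This lemma would be proved via the sesquipower form and Theorem~\ref{FundamentalPreneck}. Uniqueness would then follow immediately by symmetry in the two factorizations.
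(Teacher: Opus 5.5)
\paragraph{Verdict.} Your existence part is fine by citation. The uniqueness argument, however, has a genuine gap and cannot succeed as set up.

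\paragraph{The statement is false as written, exactly in your proper-prefix case.} With inverse Lyndon factorizations defined as in the paper ($m_i\prec m_{i+1}$, $m_i\neq m_{i+1}$), which your case split adopts, uniqueness fails, and it fails in your case ``$m_1$ a proper prefix of $m_2$''. For $w=aaa$, both $(aaa)$ and $(a,aa)$ are inverse Lyndon factorizations in this sense, and both satisfy the border property. For $(a,aa)$ this holds vacuously, because $a$ has no nonempty border (borders are proper prefixes). Similarly, $(aaa,b)$ and $(a,aa,b)$ both qualify for $w=aaab$ with $a<b$. So no sesquipower argument can produce a nonempty border of $m_1$ that is a prefix of $m_2$ when $m_1$ is unbordered. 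Your auxiliary lemma already fails for $p=a$, $q=aa$, $z=a$. The characterization taken from \cite{inverse-lyndon-2}, which the paper states without proof, is true only with the original definition of \cite{inverse-lyndon}, namely $m_1\ll m_2\ll\cdots\ll m_k$. Under that definition your second case disappears.

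\paragraph{The remaining case is not closed either.} The auxiliary lemma is also false when $p\ll q$. Take $p=ba$, which is unbordered, and $q=bb$, so $ba\ll bb$; yet $bab$ is inverse Lyndon (indeed $\ICFL(babb)=(ba,bb)$). So uniqueness cannot follow from a property of a single pair of consecutive factors. Likewise, in your subcase with $y$ short, the contradiction need not occur at $(m'_1,m'_2)$. For $w=baaabaababb$, whose $\ICFL$ is $(baaa,baaba,bb)$, take $y=baa$. Then $m'_1=baaabaa$ and $m'_2=bab$ satisfy $m'_1\ll m'_2$ and the border condition, since the only nonempty border of $m'_1$ is $baa$. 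The obstruction only appears at $m'_3=b$.

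\paragraph{How to fix it.} You avoid this cascade by inducting from the right. Show that if $(m_1,\dots,m_k)$ is $\ll$-increasing with the border property, then $m_k$ is the longest inverse Lyndon suffix of $w$. Suppose a longer one existed, $v=s\,m_{j+1}\cdots m_k$, with $s$ a nonempty suffix of $m_j$ and $j<k$. Then $s\ll m_{j+1}$ in each of three cases:
\begin{itemize}
\item $s=m_j$, where this is immediate;
\item $s\ll m_j$, where it follows by transitivity of $\ll$;
\item $s$ is a nonempty border of $m_j$, where the border property forces $|s|$ to exceed the longest common prefix of $m_j$ and $m_{j+1}$.
\end{itemize}
By Lemma~\ref{proplexord}(2), $v\ll m_{j+1}\cdots m_k$. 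This contradicts the fact that $m_{j+1}\cdots m_k$ is a proper suffix of the inverse Lyndon word $v$. Uniqueness then follows by induction on $k$. This is precisely the property the paper later invokes (Proposition~7.7 of \cite{inverse-lyndon-2}).
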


The key property is the following: for every nonempty word $w$, the last factor of $\ICFL(w)$ coincides with the longest suffix of $w$ that is an inverse Lyndon word \cite{inverse-lyndon-2}. Equivalently, if $w = w'm$, where $m$ is the longest suffix of $w$ that is an inverse Lyndon word, then $\ICFL(w) = (\ICFL(w'),m)$.
The proofs of the properties introduced in this section, and related examples, can be found in Appendix \ref{sec:icfl-recovery-app}.
The lemma below relates the Inverse Lyndon array of a word to the Inverse Lyndon array of the same word with sentinels.
\begin{lemma}
\label{lem:restrict-lambda-inv}
Let $w = w[1..n] \in \Sigma^*$ and let $x = \#\,w\,\$$, with $\# > \$ > a$ for every $a \in \Sigma$. Then, for every $i \in [1,n]$, we have $\lambda^{-1}_w[i] = \min\bigl(\lambda^{-1}_x[i+1],\, n-i+1\bigr)$.
\end{lemma}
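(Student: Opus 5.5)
Position $i$ of $w$ corresponds to position $i+1$ of $x=\#\,w\,\$$, so $x[i+1..i+m]=w[i..i+m-1]$ for every $m\in[1,n-i+1]$. For these $m$ the property ``$w[i..i+m-1]$ is an inverse Lyndon word'' is literally the same as ``$x[i+1..i+m]$ is an inverse Lyndon word''. Being inverse Lyndon depends only on the factor itself, compared with its own suffixes under the order on $\widehat{\Sigma}$, and that order restricts to the original order on $\Sigma$. Hence
\[
\lambda^{-1}_w[i]=\max\{m\in[1,n-i+1]\mid x[i+1..i+m]\text{ is inverse Lyndon}\},
\]
while $\lambda^{-1}_x[i+1]$ is the maximum of the same condition over the larger range $m\in[1,|x|-i]=[1,n-i+2]$. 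The only extra candidate is $m=n-i+2$, the factor $w[i..n]\$$ that ends with the sentinel.

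The key step is to show that the set $M=\{m\mid x[i+1..i+m]\text{ is inverse Lyndon}\}$ is downward closed in the relevant range, i.e., that inverse Lyndon words are prefix-closed. I would derive this from Proposition~\ref{InvChar}. An inverse Lyndon word is a nontrivial sesquipower $(uav')^ku$ of an anti-Lyndon word, and each nonempty prefix of it is again of the form $(uav')^{k'}p$. If $k'\ge 1$ this is a nontrivial sesquipower of the same anti-Lyndon word. If $k'=0$, the prefix $p$ is a nonempty prefix of an anti-Lyndon word, hence a nontrivial sesquipower of an anti-Lyndon word, because prefixes of Lyndon words lie in $S$. This is the standard fact behind Theorem~\ref{FundamentalPreneck}, applied in the inverse order.

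Alternatively, one can argue directly. Suppose $w'=pq$ is inverse Lyndon and $s$ is a proper nonempty suffix of $p$. Then $sq\prec pq$. If $s\ll p$ we get $s\prec p$ directly. Otherwise $p\preceq s$. If $p\ll s$, then Lemma~\ref{proplexord}(2) gives $pq\ll sq$, which contradicts $sq\prec pq$. The remaining case $p$ a prefix of $s$ is impossible, since $|s|<|p|$. Since $|s|<|p|$, $s=p$ is also excluded, so $s\prec p$. I expect this prefix-closure step to be the main point requiring care, and I would present the direct argument because it avoids the sesquipower bookkeeping.

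With prefix-closure, $M$ is an interval $[1,\lambda^{-1}_x[i+1]]$; it contains $1$ because single letters are inverse Lyndon. Then
\[
\lambda^{-1}_w[i]=\max\bigl(M\cap[1,n-i+1]\bigr)=\min\bigl(\lambda^{-1}_x[i+1],\,n-i+1\bigr),
\]
which is the claimed identity. This also handles uniformly whether or not $w[i..n]\$$ happens to be inverse Lyndon, so no separate case analysis on the sentinel is needed.
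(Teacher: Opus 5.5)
Your proof is correct. It is also more careful than the paper's, which is two lines: it asserts that the longest inverse Lyndon factor of $x$ starting at $i+1$ coincides with that of $w$ starting at $i$, because such a factor ``cannot extend beyond the end of $w$'', and gives no reason.

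The natural justification for that claim is the sentinel order. Since $\$ > w[i]$, the one-letter suffix $\$$ of $w[i..n]\$$ satisfies $\$ \succ w[i..n]\$$. So your extra candidate $m=n-i+2$ never lies in $M$. Hence $\lambda^{-1}_x[i+1]\le n-i+1$ always holds, and the $\min$ in the statement is actually redundant.

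You instead handle that candidate with prefix-closure. This makes the identity hold however $\$$ is ordered relative to $\Sigma$. The cost is that you need prefix-closure, but the paper cites it from the literature as Lemma~\ref{lem:prefix-closed} and uses it anyway for Lemma~\ref{lem:inv-cover}. Your direct proof of it via Lemma~\ref{proplexord}(2) is self-contained and sound.

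There is one small slip in the case split there. The first branch should read ``if $s\prec p$'' rather than ``if $s\ll p$''. As written, the case where $s$ is a proper prefix of $p$ falls into neither branch. That case trivially gives $s\prec p$, so nothing breaks once you split on $s\prec p$ versus $p\preceq s$.
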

 
Let us denote $R_i = i + \lambda^{-1}_w[i] - 1$ 
for $i=1,\dots,n$. Then, by definition, $w[i..R_i]=w_{\lambda^{-1}}(i)$ 
is the longest factor of $w$ starting at position $i$ that is an inverse Lyndon word. 

Lemma \ref{lem:inv-cover} highlights the connection between $w[i..j]$ and $R_i$.

\begin{lemma}
\label{lem:inv-cover}
For every pair of integers $i$ and $j$ such that $1 \le i \le j \le n$,
\[
w[i..j] \mbox{ is an inverse Lyndon word}
\quad\Longleftrightarrow\quad
R_i \ge j.
\]
\end{lemma}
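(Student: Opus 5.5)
The plan is to show that the set of $j$ for which $w[i..j]$ is an inverse Lyndon word is an interval $[i, R_i]$ containing $i$. Since $R_i$ is by definition the maximum of this set, the equivalence then follows at once. The direction ($\Rightarrow$) is immediate from the definition of $\lambda^{-1}_w$ as a maximum: if $w[i..j]$ is inverse Lyndon, then $j-i+1\le \lambda^{-1}_w[i]$, that is, $j\le R_i$. The substance lies in ($\Leftarrow$), which amounts to proving that the property is \emph{prefix-closed}: every nonempty prefix of an inverse Lyndon word is again an inverse Lyndon word. Applying this to the inverse Lyndon word $w[i..R_i]$ and its prefix $w[i..j]$ with $i\le j\le R_i$ gives the claim.

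For prefix-closure I would use Proposition~\ref{InvChar}. An inverse Lyndon word $z$ has the form $z=(uav')^ku$, a nontrivial sesquipower of an anti-Lyndon word $uav'$ with $k\ge 1$. Let $p$ be a nonempty prefix of $z$. If $|p|\ge |uav'|$, then $p=(uav')^{k'}u'$ with $k'\ge1$ and $u'$ a proper prefix of $uav'$, so $p$ is again a nontrivial sesquipower of the same anti-Lyndon word and hence inverse Lyndon by Proposition~\ref{InvChar}. If $|p|<|uav'|$, then $p$ is a nonempty prefix of an anti-Lyndon word. For this case I would invoke the classical fact that every nonempty prefix of a Lyndon word is a nontrivial sesquipower of a Lyndon word (the prefix closure of $S$ together with $\V\subseteq S$, in the theory underlying Theorem~\ref{FundamentalPreneck}), applied to the order $\prec_{in}$.

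As an alternative, I would give a direct argument that avoids the sesquipower machinery. Let $z$ be inverse Lyndon and $p=z[1..t]$. Take a nonempty proper suffix $s$ of $p$, starting at position $q>1$, and let $s'=z[q..|z|]$, so that $s'\prec z$. If $s$ is a prefix of $p$, then $s\prec p$ because $s$ is a proper prefix. Otherwise $s$ and $p$ first differ at some index $d\le |s|$. Now $s$ is a prefix of $s'$ and $p$ is a prefix of $z$, and $s'\prec z$ with the first mismatch between $s'$ and $z$ located at $d$; hence $s[d]<p[d]$ and therefore $s\prec p$. This is essentially Lemma~\ref{proplexord}(2) applied in the form "$s\not\ll p$ forces $s'\not\ll z$".

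The main obstacle is exactly this mismatch case: one must check that the first mismatch of $s'$ versus $z$ occurs within the first $|s|$ positions, and must exclude the possibility $p\ll s$. If $p\ll s$ held, Lemma~\ref{proplexord}(2) would give $z\ll s'$, contradicting $s'\prec z$. I would therefore state the argument as a short case split on whether $s$ is a prefix of $p$, using Lemma~\ref{proplexord}(2) in the second case, and then conclude the lemma as described above.
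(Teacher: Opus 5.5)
Your proof is correct and follows the paper's argument: ($\Rightarrow$) comes from the maximality in the definition of $\lambda^{-1}_w[i]$, and ($\Leftarrow$) from applying prefix-closure of inverse Lyndon words to the prefix $w[i..j]$ of $w[i..R_i]$. The only difference is that the paper cites prefix-closure (Lemma~\ref{lem:prefix-closed}) without proof, whereas you also prove it; both of your routes to it are valid, and the direct first-mismatch argument is the simpler one.
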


Now, we define $C_j = \{\, i \in [1,j] \mid R_i \ge j \,\}$ and $L_j = \min C_j$, for each $j \in [1,n]$. 
Observe that $j \in C_j$, since $\lambda^{-1}_w[j] \ge 1$. Thus, $C_j$ is the set of starting positions of inverse Lyndon factors covering position $j$, while $L_j$ is the leftmost such starting position.

Lemma \ref{lem:longest-suffix-prefix} and Lemma \ref{lem:L-monotone} show some interesting properties about $L_j$.

\begin{lemma}
\label{lem:longest-suffix-prefix}
For every $j \in [1,n]$, the factor $w[L_j..j]$ is the longest suffix of the prefix $w[1..j]$ that is an inverse Lyndon word.
\end{lemma}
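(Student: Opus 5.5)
The statement follows almost directly from Lemma~\ref{lem:inv-cover}, by translating ``inverse Lyndon suffix of $w[1..j]$'' into a condition on starting positions.

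\emph{Step 1: suffixes of $w[1..j]$ and the set $C_j$.} Every nonempty suffix of $w[1..j]$ has the form $w[i..j]$ for a unique $i\in[1,j]$, and its length is $j-i+1$. So the longest suffix with some property corresponds to the smallest admissible starting index $i$. By Lemma~\ref{lem:inv-cover}, $w[i..j]$ is an inverse Lyndon word if and only if $R_i\ge j$, that is, if and only if $i\in C_j$. Hence
\[
C_j=\{\, i\in[1,j] \mid w[i..j] \mbox{ is an inverse Lyndon word}\,\},
\]
and $C_j$ is exactly the set of starting positions of the inverse Lyndon suffixes of $w[1..j]$.

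\emph{Step 2: existence of the minimum.} $C_j$ is nonempty: $j\in C_j$ because $\lambda^{-1}_w[j]\ge 1$. Equivalently, the single letter $w[j]$ is trivially an inverse Lyndon word, having no nonempty proper suffix. So $L_j=\min C_j$ is well defined.

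\emph{Step 3: conclusion.} By Step~1, $w[L_j..j]$ is an inverse Lyndon suffix of $w[1..j]$. Now take any inverse Lyndon suffix $w[i..j]$. Then $i\in C_j$, so $i\ge L_j$ and $|w[i..j]|\le|w[L_j..j]|$. Thus $w[L_j..j]$ is the longest such suffix.

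There is no substantial obstacle here; all the combinatorial content sits in Lemma~\ref{lem:inv-cover}, which we may assume. The nontrivial direction of that lemma is the implication $R_i\ge j\Rightarrow w[i..j]$ is inverse Lyndon, which holds because prefixes of inverse Lyndon words are inverse Lyndon. The only care needed is the bookkeeping that identifies suffixes of $w[1..j]$ with indices $i\in[1,j]$, and the observation that minimizing $i$ maximizes the length.
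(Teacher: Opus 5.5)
Your proof is correct and follows essentially the same route as the paper: both use Lemma~\ref{lem:inv-cover} to show that $w[L_j..j]$ is an inverse Lyndon word, and then use the minimality of $L_j$ in $C_j$ to exclude any longer inverse Lyndon suffix. The only difference is presentation: the paper argues maximality by contradiction, whereas you first identify $C_j$ with the set of starting positions of inverse Lyndon suffixes of $w[1..j]$ and conclude directly.
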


\begin{lemma}
\label{lem:L-monotone}
The sequence $(L_j)_{j=1}^n$ is nondecreasing.
\end{lemma}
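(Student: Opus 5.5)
We need to show $L_j \le L_{j+1}$ for every $j \in [1,n-1]$. The plan is to argue by contradiction, using Lemma~\ref{lem:inv-cover} to translate everything into statements about which factors of $w$ are inverse Lyndon words.

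Suppose $L_{j+1} < L_j$ and set $i = L_{j+1}$. Then $i \le j$, since $L_j \le j$. By definition of $C_{j+1}$ we have $R_i \ge j+1 > j$. Together with $i \le j$, this gives $i \in C_j$, hence $L_j = \min C_j \le i = L_{j+1}$. This contradicts the assumption, and it is in fact a direct proof of the inequality.

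The only step requiring care is the claim $i \le j$. If $L_{j+1} = j+1$, then $L_j \le j < j+1$ and the inequality holds trivially. Otherwise $L_{j+1} \le j$, and the argument above applies.

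The underlying fact is the prefix-closure of inverse Lyndon words: if $w[i..j+1]$ is inverse Lyndon, so is $w[i..j]$. This is already encoded in Lemma~\ref{lem:inv-cover} through the monotone condition $R_i \ge j$, so we only need that lemma rather than any new combinatorics. Phrased via Lemma~\ref{lem:longest-suffix-prefix}, the same argument reads as follows: the longest inverse Lyndon suffix of $w[1..j+1]$, with its last letter removed, is an inverse Lyndon suffix of $w[1..j]$ (or is empty when $L_{j+1}=j+1$). The main, though minor, obstacle is therefore just handling the boundary case $L_{j+1}=j+1$ cleanly.
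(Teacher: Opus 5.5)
Your proof is correct and is essentially the paper's argument. Assuming $L_{j+1}<L_j$, you get $R_{L_{j+1}}\ge j+1\ge j$, so $L_{j+1}\in C_j$, which contradicts the minimality of $L_j$; you also state explicitly the condition $L_{j+1}\le j$ needed for membership in $C_j$, which the paper leaves implicit.
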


Now we can show that $R_j$ and $L_j$ can be computed in linear time and space. Specifically, Algorithm~\ref{alg:compute-L} takes as input $\lambda^{-1}_w$ and computes the arrays $R$ and $L$, where $R[j]=R_j$ and $L[j]=L_j$ for every $j=1,\dots,n$.

\begin{algorithm}[h]
\caption{\proc{Compute-L}$(\lambda^{-1}[1..n])$}
\label{alg:compute-L}
\DontPrintSemicolon
\SetAlgoLined
\SetInd{0.4em}{0.9em}

\KwIn{$\lambda^{-1}[1..n]$}
\KwOut{$R[1..n]$, $L[1..n]$}

\For{$i \gets 1$ \KwTo $n$}{
  $R[i] \gets i + \lambda^{-1}[i] - 1$\;
}
$p \gets 1$\;
\For{$j \gets 1$ \KwTo $n$}{
  \While{$R[p] < j$}{
    $p \gets p+1$\;
  }
  $L[j] \gets p$\;
}
\Return{$R,L$}\;
\end{algorithm}

\begin{proposition}
\label{prop:compute-L}
Algorithm \proc{Compute-L} correctly computes the arrays $R[1..n]$ and $L[1..n]$ in $O(n)$ time and $O(n)$ space.
\end{proposition}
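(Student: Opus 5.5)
The proof has two parts: correctness of the two loops, argued by a loop invariant, and an amortized count of the pointer moves.

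\emph{First loop.} It sets $R[i]=i+\lambda^{-1}[i]-1=R_i$ directly, by the definition of $R_i$.

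\emph{Second loop.} I would use the following invariant: at the start of iteration $j$ we have $p\le L_j$. Initially $p=1\le L_1$. The while-loop only advances $p$ past indices $q$ with $R_q<j$, and such a $q$ is not in $C_j$. It also never moves past $L_j$, because $R_{L_j}\ge j$ stops it there. Hence the loop halts exactly at the least $q\ge p$ with $R_q\ge j$.

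It remains to show this $q$ equals $L_j$. Every index in $C_j$ smaller than $p$ would have to be excluded, and this is where the invariant $p\le L_j$ matters. To maintain it from one iteration to the next, I would invoke Lemma~\ref{lem:L-monotone}: after iteration $j$ we have $p=L_j\le L_{j+1}$. Combining the two facts, the first index $q\ge p$ with $R_q\ge j$ is $\min C_j=L_j$.

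For the loop to be well defined, I also need it to stop at some $q\le j$. This holds since $j\in C_j$, i.e.\ $R_j\ge j$ because $\lambda^{-1}_w[j]\ge 1$. So $p\le j\le n$ always and no index goes out of range.

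\emph{Complexity.} The first loop takes $O(n)$. In the second, $p$ never decreases and never exceeds $n$, so the total number of increments over all iterations is at most $n-1$. Together with the $n$ outer iterations, this gives $O(n)$ time. The space is the two arrays of size $n$ plus $O(1)$ variables.

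The main subtle point is the invariant $p\le L_j$, i.e.\ that discarding indices in earlier iterations never skips a valid starting position for a later $j$. Lemma~\ref{lem:L-monotone} supplies exactly this. It could also be argued directly: any $q$ skipped at step $j$ satisfies $R_q<j\le j'$ for every later $j'$, so $q$ never reenters any later $C_{j'}$.
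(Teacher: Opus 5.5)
Your proof is correct and follows essentially the same route as the paper. The paper proves by induction that $p=L_{j-1}$ at the start of iteration $j$ and invokes Lemma~\ref{lem:L-monotone}; it then excludes every $i<p$ from $C_j$ via $R_i<j-1$ (your ``direct'' alternative) and bounds the total pointer increments by $n-1$, which matches your invariant $p\le L_j$ and your amortized count in slightly different packaging.
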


In addition $R_j$ and $L_j$ can be used to reconstruct $\ICFL(w)$ for a word $w$. 
Taking $L$ as input, Algorithm~\ref{alg:recover-icfl} reconstructs $\ICFL(w)$ from right to left.

\begin{algorithm}[h]
\caption{\proc{Recover-ICFL}$(w[1..n],\lambda^{-1}[1..n])$}
\label{alg:recover-icfl}
\DontPrintSemicolon
\SetAlgoLined
\SetInd{0.4em}{0.9em}

\KwIn{word $w[1..n]$, Inverse Lyndon array $\lambda^{-1}[1..n]$}
\KwOut{$\ICFL(w)$}

compute $(R,L)$ using \proc{Compute-L}$(\lambda^{-1})$\;
$j \gets n$\;
$F \gets$ empty list\;
\While{$j \ge 1$}{
  $s \gets L[j]$\;
  add the factor $w[s..j]$ to the front of $F$\;
  $j \gets s-1$\;
}
\Return{$F$}\;
\end{algorithm}

In the following, we establish the results required to prove the correctness of Algorithm~\ref{alg:recover-icfl}, which computes $\ICFL(w)$ from \(\lambda_w^{-1}\). More precisely, Lemma~\ref{lem:single-step-icfl} shows that, at each step, Algorithm~\ref{alg:recover-icfl} correctly determines the last factor of \(\ICFL(u)\), where \(u\) is the current prefix of \(w\).

\begin{lemma}
\label{lem:single-step-icfl}
Let $u = w[1..j]$ be the prefix currently considered by \proc{Recover-ICFL}, and let $m = u[L[j]..j]$. Then $m$ is the last factor of $\ICFL(u)$.
\end{lemma}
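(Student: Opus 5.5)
The plan is to combine the key property of $\ICFL$ quoted above from \cite{inverse-lyndon-2} with Lemma~\ref{lem:longest-suffix-prefix}, applied to the prefix $u$ rather than to $w$. The property says that for every nonempty word $v$, the last factor of $\ICFL(v)$ is the longest suffix of $v$ that is an inverse Lyndon word. So it suffices to show that $m = u[L[j]..j] = w[L_j..j]$ is the longest suffix of $u = w[1..j]$ that is an inverse Lyndon word.

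First, by Proposition~\ref{prop:compute-L}, the array computed by \proc{Compute-L} satisfies $L[j] = L_j = \min C_j$. Hence $m$ is exactly the factor $w[L_j..j]$.

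Second, apply Lemma~\ref{lem:longest-suffix-prefix}. It states that $w[L_j..j]$ is the longest suffix of $w[1..j]$ that is an inverse Lyndon word. The only point to check is that $L_j$ is defined through $R$, which is computed from $\lambda^{-1}_w$ of the whole word $w$ and not from $\lambda^{-1}_u$. This is harmless, because Lemma~\ref{lem:inv-cover} gives a criterion that depends only on the factor $w[i..j]$ itself: for $i \le j$, $w[i..j]$ is inverse Lyndon iff $R_i \ge j$. So $C_j$ is precisely the set of $i \le j$ such that $w[i..j]$ is an inverse Lyndon word. Restricting to positions $\le j$ therefore gives the same set whether $u$ or $w$ is used. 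If one prefers to work with $\lambda^{-1}_u$ explicitly, it equals $\min(\lambda^{-1}_w[i], j-i+1)$, in the same spirit as Lemma~\ref{lem:restrict-lambda-inv}.

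Third, conclude with the key property: $\ICFL(u) = (\ICFL(u'), m)$ where $u = u'm$. In particular, $m$ is the last factor of $\ICFL(u)$.

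The main obstacle is the second step, namely making sure that $L_j$, which is built from the global array of $w$, faithfully describes suffixes of the prefix $u$. I would handle it through Lemma~\ref{lem:inv-cover}, as described, since that makes membership in $C_j$ a property intrinsic to the factor $w[i..j]$. The nonemptiness needed for the key property is immediate, because $j \in C_j$ and so $m \neq \varepsilon$.
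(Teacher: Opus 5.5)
Your proof is correct and follows the same route as the paper. Lemma~\ref{lem:longest-suffix-prefix} identifies $u[L[j]..j]$ as the longest inverse Lyndon suffix of $u$, and the cited property from \cite{inverse-lyndon-2} (Proposition~7.7 and Remark~7.8 there) makes it the last factor of $\ICFL(u)$; your extra checks that $L[j]=L_j$ by Proposition~\ref{prop:compute-L}, and that $C_j$ depends only on the factors $w[i..j]$ by Lemma~\ref{lem:inv-cover}, are sound and merely make explicit what the paper leaves implicit.
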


\begin{proof}
By Lemma~\ref{lem:longest-suffix-prefix}, $u[L[j]..j]$ is the longest suffix of $u$ that is inverse Lyndon word. By Proposition~7.7 in~\cite{inverse-lyndon-2}, if $\ICFL(u)=(m_1,\dots,m_k)$ then $m_k$ is precisely the longest suffix of $u$ that is an inverse Lyndon word. Thus $m = u[L[j]..j] = m_k$. Finally, by Remark~7.8~\cite{inverse-lyndon-2}, $\ICFL(u)=(\ICFL(u'),m)$.
\end{proof}

\begin{theorem}
\label{thm:recover-icfl-correct}
Algorithm \proc{Recover-ICFL} correctly computes $\ICFL(w)$.
\end{theorem}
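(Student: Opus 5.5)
Our plan is to argue by induction on the length $j$ of the prefix $u=w[1..j]$ currently handled by the while-loop of \proc{Recover-ICFL}. The invariant we maintain is the following. At the start of each iteration with current value $j$, the list $F$ equals the suffix of $\ICFL(w)$ that factorizes $w[j+1..n]$. Equivalently, $\ICFL(w)=(\ICFL(w[1..j]),F)$, where we read $\ICFL(\varepsilon)$ as the empty sequence. Initially $j=n$ and $F$ is empty, so the invariant holds trivially. When the loop exits with $j=0$, the invariant gives $F=\ICFL(w)$.

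For the inductive step, assume the invariant holds for some $j\ge 1$ and set $s=L[j]$. By Proposition~\ref{prop:compute-L}, $L[j]=L_j$, and $1\le L_j\le j$ because $L_j\in C_j\subseteq[1,j]$. Apply Lemma~\ref{lem:single-step-icfl} to $u=w[1..j]$. It gives $m=w[s..j]$ as the last factor of $\ICFL(u)$ and, via Remark~7.8 of \cite{inverse-lyndon-2}, $\ICFL(u)=(\ICFL(w[1..s-1]),m)$. Substituting this into $\ICFL(w)=(\ICFL(u),F)$ gives $\ICFL(w)=(\ICFL(w[1..s-1]),m,F)$. The algorithm prepends $m$ to $F$ and sets $j\gets s-1$, so the invariant is restored.

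One point needs care, namely that Lemma~\ref{lem:single-step-icfl} is applied to prefixes $u$ of $w$ while $L$ was computed from $\lambda^{-1}_w$. We will check that $L_j$, as defined through $R_i$ with $i\le j$, depends only on whether factors $w[i..j]$ are inverse Lyndon words (Lemma~\ref{lem:inv-cover}). Hence Lemma~\ref{lem:longest-suffix-prefix} gives exactly the longest inverse Lyndon suffix of $u$, independently of the characters after position $j$.

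Termination follows because $s\ge 1$ makes $j$ strictly decrease at each iteration, so the loop runs at most $n$ times. The factors produced are nonempty and together cover $[1,n]$ exactly. For the running time, Proposition~\ref{prop:compute-L} computes $L$ in $O(n)$, and each iteration costs $O(1)$ if factors are stored as index pairs, so the whole algorithm is linear.

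The main obstacle is the prefix-independence point above. It ensures that the single global array $L$ serves every recursive call on $\ICFL(w[1..j])$; once it is settled, the rest is a routine induction.
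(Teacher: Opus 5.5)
Your proof is correct and follows the same route as the paper. Both apply Lemma~\ref{lem:single-step-icfl} to the current prefix $u=w[1..j]$, use $\ICFL(u)=(\ICFL(u'),m)$ with $u'=w[1..L[j]-1]$, and continue on $u'$; your invariant $\ICFL(w)=(\ICFL(w[1..j]),F)$ just turns the paper's informal ``the same argument applies at each subsequent iteration'' into an explicit induction. The prefix-independence point you call the main obstacle is already covered by Lemma~\ref{lem:longest-suffix-prefix}, which is stated directly for the prefix $w[1..j]$ via Lemma~\ref{lem:inv-cover}, so it needs no separate argument.
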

\begin{proof}
The algorithm works from right to left on \(w\). At each iteration, let $u = w[1..j]$ be the current prefix. It computes $s=L[j]$ and extracts the factor $m = u[s..j]$.
By Lemma~\ref{lem:single-step-icfl}, \(m\) is the last factor of \(\ICFL(u)\). Therefore, writing $u = u'm$, with $u' = w[1..s-1]$, we have $\ICFL(u) = (\ICFL(u'), m)$.
The algorithm then continues on \(u'\), and the same argument applies at each subsequent iteration. Hence, it successively extracts the factors of \(\ICFL(w)\) from right to left. Since each extracted factor is added to the front of the list \(F\), the final output preserves the correct left-to-right order. Therefore, the algorithm returns \(\ICFL(w)\).
\end{proof}

\begin{theorem}
\label{thm:recover-icfl-linear}
Algorithm \proc{Recover-ICFL} computes $\ICFL(w)$ in $O(n)$ time and $O(n)$ space.
\end{theorem}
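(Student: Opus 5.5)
The proof splits into two parts: bounding the preprocessing step \proc{Compute-L}, and bounding the right-to-left extraction loop.

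\textbf{Preprocessing.} By Proposition~\ref{prop:compute-L}, the call \proc{Compute-L}$(\lambda^{-1})$ returns $R[1..n]$ and $L[1..n]$ in $O(n)$ time and $O(n)$ space. The cost is linear because the pointer $p$ only increases. It also never exceeds $n$: we have $R[j]\ge j$ for every $j$, so $R[p]<j$ forces $p<j\le n$.

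\textbf{The extraction loop.} Correctness is already given by Theorem~\ref{thm:recover-icfl-correct}, so it remains only to count iterations. First we check that $j$ strictly decreases. Since $j\in C_j$, we have $L[j]=\min C_j\le j$, hence $s-1\le j-1<j$. The value of $j$ starts at $n$ and the loop stops once $j<1$. Therefore there are at most $n$ iterations; in fact, the number of iterations equals the number of factors of $\ICFL(w)$. Each iteration does the following, all in $O(1)$ time:
\begin{itemize}
\item one array access $L[j]$;
\item one list insertion at the front;
\item one assignment $j\gets s-1$.
\end{itemize}

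\textbf{Representation of factors.} The only point needing care is how a factor $w[s..j]$ is stored in $F$. Copying its characters would cost $|w[s..j]|$. Even then, the extracted factors partition $w$, so the total copying cost over all iterations is $\sum |m_\ell| = n$. Alternatively, we store each factor as the pair $(s,j)$ in $O(1)$ time per factor. Either way the loop takes $O(n)$ time. Using a linked list, or filling an array from the back, keeps front insertion at $O(1)$ per step.

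\textbf{Space.} The algorithm uses the input arrays, the arrays $R$ and $L$, and the list $F$. The list $F$ holds at most $n$ entries, or $n$ characters in total if factors are copied. So the space is $O(n)$.

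Adding the two parts gives $O(n)$ time and $O(n)$ space. There is no real obstacle here beyond stating the factor representation explicitly, so that the output size does not hide a superlinear cost.
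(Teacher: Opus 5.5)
Your proposal is correct and follows the paper's proof essentially step for step: linear preprocessing via Proposition~\ref{prop:compute-L}, strict decrease of $j$ because $L[j]\le j$, $O(1)$ work per iteration, and the observation that materialized factors have total length exactly $n$. You also note two details the paper leaves implicit: the pointer $p$ stays within range because $R[j]\ge j$, and front insertion can be done in $O(1)$ time.
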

\begin{proof}
By Proposition~\ref{prop:compute-L}, the computation of the arrays \(R\) and \(L\) requires time \(O(n)\) and space \(O(n)\). In the reconstruction phase, the current value of \(j\) is updated as $j \gets L[j]-1$. Since $1 \le L[j] \le j$, the value of \(j\) decreases strictly at each iteration. Hence the total number of steps is at most \(n\).
At each step, the algorithm performs a constant number of accesses to the arrays and determines exactly one new factor. If the factors are stored by means of their endpoints, the cost of each step is \(O(1)\). Even if the substrings are explicitly materialized, the sum of the lengths of all output factors is exactly \(n\), since they form a factorization of \(w\).
Therefore, the reconstruction phase requires total time \(O(n)\). Adding the preprocessing cost, we obtain total time \(O(n)\) and total space \(O(n)\).
\end{proof}

\section{Conclusions and Future Perspectives}\label{sec:suffix}

We introduced the Inverse Lyndon array, characterizing it via the NGS array and a border correction term. Then we provide a linear-time construction over general ordered alphabets by adapting the NSS/PSS framework to NGS/PGS. The correction is captured by LCE values on NGS edges, preserving the same amortized linear-time behavior, also confirmed experimentally. Finally, we showed how to reconstruct $\ICFL$ from the Inverse Lyndon array in linear time.

A natural perspective is to face the
problem of suffix sorting by using the Inverse Lyndon array. It is well known that the Lyndon array supports suffix sorting through a compatibility property~\cite{baier:LIPIcs.CPM.2016.23,accelerate-suffix-sorting}. 
Even if a compatibility property still holds for
$\ICFL$ \cite{tcs2021}, a strategy for suffix sorting based on inverse Lyndon words 
has not been proposed.
A future work is to combine the results presented in this paper 
with other properties of $\ICFL$ to define efficient approaches to suffix sorting,
a direction that also aligns with the broader formal language perspective discussed in~\cite{DLT22}.

\newpage
\appendix
\input{appendix}

\begin{credits}
\subsubsection{\ackname}
{\it Clelia De Felice}: from MUR 2022YRB97K, PINC, Pangenome INformatiCs: from Theory to Applications.
{\it Manuel Sica and Rosalba Zizza}: from INdAM - GNCS Project CUP\_E53C25002010001.

\subsubsection{\discintname}
The authors declare that they have no competing interests relevant to this article.
\end{credits}

\begingroup
\sloppy
\hbadness=2000
\emergencystretch=1.5em
\bibliography{references}
\endgroup

\end{document}

%% file: appendix.tex
\section{Further Examples of the Border Correction}
\label{sec:further-border-examples}

The following two examples illustrate the border-correction term in Corollary~\ref{cor:recover}, first when the border overlaps with itself and then when it does not.

\begin{example}[Overlapping border]\label{ex:overlapping-border}
Let $x=\#\mathit{bababab}\$$, with $\# > \$ > b > a$. The Inverse Lyndon array and the associated NGS/PGS arrays are:
\begin{center}
\small
$\begin{array}{l|ccccccccc}
i & 1 & 2 & 3 & 4 & 5 & 6 & 7 & 8 & 9 \\
x & \# & b & a & b & a & b & a & b & \$ \\
\hline
\lambda^{-1}       & 9 & 7 & 1 & 5 & 1 & 3 & 1 & 1 & 1 \\
\mathit{next}^{-1} & 10 & 4 & 4 & 6 & 6 & 8 & 8 & 9 & 10 \\
\mathit{prev}^{-1} & 0 & 1 & 2 & 1 & 4 & 1 & 6 & 1 & 1 \\
\mathit{border}    & 0 & 5 & 0 & 3 & 0 & 1 & 0 & 0 & 0 \\
\end{array}$
\end{center}
At $i=2$, the longest inverse Lyndon subword is $x[2..8]=\mathit{bababab}$, so $\lambda^{-1}[2]=7$. Its longest border is $\mathit{babab}$, of length $5$, and the two occurrences of this border overlap. We have $\mathit{next}^{-1}[2]=4$ and $\mathrm{lce}(2,4)=5$, hence
\[
\lambda^{-1}[2]
=\mathit{next}^{-1}[2]-2+\mathrm{lce}(2,\mathit{next}^{-1}[2])
=4-2+5=7.
\]
Thus the border correction also captures an overlapping border.
\end{example}

\begin{example}[Non-overlapping border]\label{ex:nonoverlapping-border}
Let $x=\#\mathit{baaba}\$$, with $\# > \$ > b > a$. The Inverse Lyndon array and the associated NGS/PGS arrays are:
\begin{center}
\small
$\begin{array}{l|ccccccc}
i & 1 & 2 & 3 & 4 & 5 & 6 & 7 \\
x & \# & b & a & a & b & a & \$ \\
\hline
\lambda^{-1}       & 7 & 5 & 2 & 1 & 2 & 1 & 1 \\
\mathit{next}^{-1} & 8 & 5 & 4 & 5 & 7 & 7 & 8 \\
\mathit{prev}^{-1} & 0 & 1 & 2 & 2 & 1 & 5 & 1 \\
\mathit{border}    & 0 & 2 & 1 & 0 & 0 & 0 & 0 \\
\end{array}$
\end{center}
At $i=2$, the longest inverse Lyndon subword is $x[2..6]=\mathit{baaba}$, so $\lambda^{-1}[2]=5$. Its longest border is $\mathit{ba}$, of length $2$, and the two occurrences do not overlap. We have $\mathit{next}^{-1}[2]=5$ and $\mathrm{lce}(2,5)=2$, hence
\[
\lambda^{-1}[2]
=\mathit{next}^{-1}[2]-2+\mathrm{lce}(2,\mathit{next}^{-1}[2])
=5-2+2=5.
\]
\end{example}

\section{Technical Proofs}
\label{sec:technicalproofs}

\noindent
\textbf{Proof of Lemma \ref{lem:inv-no-crossing}}

\begin{proof}
Assume for contradiction that $l_1 < l_2 < r_1 < r_2$. Since $l_2 \in (l_1,r_1)$ and $l_1,r_1$ are connected by an NGS or PGS edge, every suffix starting in $(l_1,r_1)$ is lexicographically smaller than both endpoints, hence $x_{l_2} \prec x_{r_1}$. On the other hand, since $r_1 \in (l_2,r_2)$ and $l_2,r_2$ are connected, we get $x_{l_2} \succ x_{r_1}$. This contradiction proves the claim.
\end{proof}

\medskip
\noindent
\textbf{Proof of Lemma \ref{lem:inv-chain}}

\begin{proof}
For the first statement, suppose $\mathit{prev}^{-1}[r] \neq (\mathit{prev}^{-1})^*[r{-}1]$. Then there exists $r' = (\mathit{prev}^{-1})^*[r{-}1]$ such that $\mathit{prev}^{-1}[r] \in (\mathit{prev}^{-1}[r'],r')$, giving
\[
\mathit{prev}^{-1}[r'] < \mathit{prev}^{-1}[r] < r' < r,
\]
which contradicts Lemma~\ref{lem:inv-no-crossing}.

For the second statement, assume first that $\mathit{next}^{-1}[\ell] = r$. Then all suffixes starting in $(\ell,r)$ are smaller than $x_r$, so $\mathit{prev}^{-1}[r] \notin [\ell,r)$ and therefore $\mathit{prev}^{-1}[r] < \ell$. If $\ell \neq (\mathit{prev}^{-1})^*[r{-}1]$, then there exists $r' = (\mathit{prev}^{-1})^*[r{-}1]$ such that
\[
\mathit{prev}^{-1}[r'] < \ell < r' < r,
\]
again contradicting Lemma~\ref{lem:inv-no-crossing}.

Conversely, assume that $\mathit{prev}^{-1}[r] < \ell$ and $\ell = (\mathit{prev}^{-1})^*[r{-}1]$. Since $\ell \in (\mathit{prev}^{-1}[r],r)$, we have $x_{\ell} \prec x_r$. Moreover, the chain property implies $x_k \prec x_{\ell}$ for every $k \in (\ell,r)$. Therefore $r$ is the first position to the right of $\ell$ with a greater suffix, namely $\mathit{next}^{-1}[\ell] = r$.
\end{proof}

\section{From the Inverse Lyndon array to ICFL}
\label{sec:icfl-recovery-app}

\medskip
\noindent
\textbf{Proof of Lemma \ref{lem:restrict-lambda-inv}}

\begin{proof}
By definition, the longest factor of $w$ starting at position $i$ that is an inverse Lyndon word coincides with the longest factor of $x$ starting at position $i+1$ that is an inverse Lyndon word. Such a factor cannot extend beyond the end of $w$, hence its length is at most $n-i+1$. Therefore,
\[
\lambda^{-1}_w[i] = \min\bigl(\lambda^{-1}_x[i+1],\, n-i+1\bigr).
\]
\end{proof}

\medskip
\noindent
The following prefix-closure property is fundamental~\cite{inverse-lyndon}.

\begin{lemma}\label{lem:prefix-closed}
Every nonempty prefix of an inverse Lyndon word is an inverse Lyndon word.
\end{lemma}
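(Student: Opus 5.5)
The plan is to argue directly from Definition~\ref{def:inv-lyn}. Let $w$ be an inverse Lyndon word and let $p$ be a nonempty prefix of $w$. If $p=w$ there is nothing to prove, so assume $p$ is a proper prefix and write $w=pq$ with $q$ nonempty. Fix a nonempty proper suffix $s$ of $p$; we must show $s\prec p$. Write $p=ts$ with $t$ nonempty. Then $sq$ is a nonempty proper suffix of $w$, so by hypothesis $sq\prec w=pq$.

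Next we compare $s$ with $p$ directly. Since $|s|<|p|$, there are only two possibilities.
\begin{itemize}
\item[(a)] $s$ is a proper prefix of $p$. Then $s\prec p$ by definition of $\prec$ and we are done.
\item[(b)] $s$ and $p$ differ at some position $j\le |s|$. Then $s\prec p$ or $p\prec s$ with $s\ll p$ or $p\ll s$, respectively. If $p\ll s$, Lemma~\ref{proplexord}(2) gives $pq\ll sq$, i.e.\ $w\prec sq$, contradicting $sq\prec w$. Hence $s\ll p$, so $s\prec p$.
\end{itemize}
The case $s=p$ cannot occur, because $|s|<|p|$.

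In all cases $s\prec p$, so every nonempty proper suffix of $p$ is smaller than $p$, and $p$ is inverse Lyndon.

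The only delicate point is that the lexicographic order is not compatible with right concatenation in general. That is why we split off the prefix case (a) and invoke Lemma~\ref{proplexord}(2) only for the relation $\ll$, where appending $q$ on both sides is safe. An alternative route would go through Proposition~\ref{InvChar}, using that prefixes of nontrivial sesquipowers of an anti-Lyndon word are again such sesquipowers (or an anti-Lyndon prefix). However, that requires care with the trivial-sesquipower case, so the direct argument above is preferable.
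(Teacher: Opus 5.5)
Your proof is correct and complete. The paper gives no argument for this lemma; it states it in the appendix as a known fact taken from \cite{inverse-lyndon}. So there is no in-paper proof to compare against, and yours is a self-contained proof using only Definition~\ref{def:inv-lyn} and Lemma~\ref{proplexord}(2).

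The key step is the one you isolate. Because $|s|<|p|$, the relation $p\prec s$ can only hold in the strong form $p\ll s$. Appending $q$ on both sides is safe only for $\ll$, which gives $w=pq\ll sq$ and contradicts $sq\prec w$. Splitting off the case where $s$ is a prefix of $p$ is exactly what makes this step legitimate.

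The alternative route you mention, via Proposition~\ref{InvChar}, would work but is heavier. It needs the set of nontrivial sesquipowers of anti-Lyndon words to be prefix-closed. For prefixes shorter than the period, this means showing that a proper prefix of an anti-Lyndon word is still a nontrivial sesquipower of some other anti-Lyndon word, even though it need not be anti-Lyndon itself. For example, with $a<b$, the word $bba$ is anti-Lyndon but its prefix $bb$ is not; it is only the sesquipower $b^2$. Proving this is essentially an induction based on Theorem~\ref{FundamentalPreneck} for the inverse order. Your direct route avoids all of that.
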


\textbf{Proof of Lemma \ref{lem:inv-cover}}
\begin{proof}
Assume that $R_i \ge j$. Then $w[i..j]$ is a non-empty prefix of $w[i..R_i]$. Since inverse Lyndon words are prefix-closed by Lemma~\ref{lem:prefix-closed}, it follows that $w[i..j]$ is also an inverse Lyndon word.

Conversely, suppose that $w[i..j]$ is an inverse Lyndon word. By definition of $\lambda^{-1}_w[i]$, we have
\[
\lambda^{-1}_w[i] \ge |w[i..j]| = j-i+1.
\]
Hence,
\[
R_i = i + \lambda^{-1}_w[i] - 1 \ge i + (j-i+1) - 1 = j.
\]
\end{proof}

\medskip
\noindent
\textbf{Proof of Lemma \ref{lem:longest-suffix-prefix}}

\begin{proof}
Let $j \in [1,n]$ be an index. By definition of $C_j$ and 
$L_j$, ssince $L_j\in C_j$, we have $R_{L_j}\ge j$. Hence, by Lemma~\ref{lem:inv-cover}, the factor $w[L_j..j]$
is an inverse Lyndon word. Moreover, it is a suffix of the prefix $w[1..j]$. 

We prove that it is the \emph{longest} such suffix by contradiction. 

Assume that there exists a suffix $w[i..j]$ of $w[1..j]$ which is an inverse Lyndon word and is longer than $w[L_j..j]$. Since both factors end at position $j$, this implies that $i < L_j$. Now, since $w[i..j]$ is an inverse Lyndon word, by Lemma~\ref{lem:inv-cover} we have $R_i\ge j$. Therefore, $i\in C_j$. But this contradicts the fact that $L_j=\min C_j$, since we have found an index $i\in C_j$ such that $i<L_j$.

Hence, no longer suffix of $w[1..j]$ can be an inverse Lyndon word. Therefore, $w[L_j..j]$ is the longest one.
\end{proof}

\medskip
\noindent
\textbf{Proof of Lemma \ref{lem:L-monotone}}

\begin{proof}
By contradiction, let $j \in [1,n-1]$ such that $L_{j+1} < L_j$. 

Since $L_{j+1} \in C_{j+1}$, by the definition of $C_{j+1}$ we have $R_{L_{j+1}} \ge j+1$. In particular, $R_{L_{j+1}} \ge j$,
which implies that $L_{j+1} \in C_j$. This is impossible, however, because $L_j$ is the minimum element of $C_j$, while $L_{j+1}$ is an element of $C_j$ strictly smaller than $L_j$. 
\end{proof}

\medskip
\noindent
\textbf{Proof of Proposition \ref{prop:compute-L}}

\begin{proof}
The correctness of the computation of $R$ follows from its definition.

We prove that $L$ is computed correctly by induction on the value of $j$.

Let $j=1$. The pointer $p$ is initialized to $1$, which is the first index such that $R[p] \ge 1$, that is  the minimum of $C_1$. Thus, $L[1]$ is computed correctly.

Now assume that $j>1$. By the induction hypothesis, $L[j-1]$ has been computed correctly. By construction, when the loop begins its iteration for the value $j$, the pointer $p$ is equal to $L[j-1]$. By Lemma~\ref{lem:L-monotone}, we have $L[j] \ge L[j-1]$. At this point, we need to consider the following two cases.

If $R[p] \ge j$, then $p \in C_j$. Moreover, no index $i < p$ can belong to $C_j$. Indeed, if $i < p = L[j-1]$, then $i$ cannot belong to $C_{j-1}$, otherwise $p$ would not be the minimum of $C_{j-1}$. Hence, $R[i] < j-1$, and therefore $R[i] < j$. It follows that $i \notin C_j$. Thus, $p$ is the minimum of $C_j$, that is, $p = L[j]$.

Assume instead that $R[p] < j$. Then the \texttt{while} loop advances $p$ to the first index $q$ such that $R[q] \ge j$. Hence, by construction, $q \in C_j$. Furthermore, no index $i < q$ belongs to $C_j$: if $i < p$, then $R[i] < j$ by the previous argument, whereas if $p \le i < q$, the \texttt{while} loop explicitly checks that $R[i] < j$. It follows that $q$ is the minimum element of $C_j$, namely $q = L[j]$. 

Therefore, the algorithm correctly computes $L$.

Regarding the complexity, the array $R$ is computed by a single scan of $\lambda^{-1}$, and thus in $O(n)$ time. As for $L$, the pointer $p$ starts at $1$ and never decreases, so across the entire execution it is incremented at most $n-1$ times. It follows that $L$ is also computed in $O(n)$ time. Finally, the algorithm uses $O(n)$ space to store the arrays $R$ and $L$.
\end{proof}

\begin{example}
Let $w=\mathit{aababbaa}$ and let $x=\#\,\mathit{aababbaa}\$$. Then
\[
\lambda^{-1}_w = [2,1,3,1,4,3,2,1].
\]

For each $i=1,\dots,n$, compute
\[
R[i] = i+\lambda^{-1}_w[i]-1.
\]
Thus we obtain
\[
R[1]=2,\quad
R[2]=2,\quad
R[3]=5,\quad
R[4]=4,
\]
\[
R[5]=8,\quad
R[6]=8,\quad
R[7]=8,\quad
R[8]=8,
\]
hence
\[
R=[2,2,5,4,8,8,8,8].
\]

Now, for each $j=1,\dots,n$, compute
\[
C_j=\{\,i\in[1,j]\mid R[i]\ge j\,\}.
\]
We obtain
\[
C_1=\{1\},\qquad
C_2=\{1,2\},
\]
\[
C_3=\{3\},\qquad
C_4=\{3,4\},
\]
\[
C_5=\{3,5\},\qquad
C_6=\{5,6\},
\]
\[
C_7=\{5,6,7\},\qquad
C_8=\{5,6,7,8\}.
\]

Therefore
\[
L[j]=\min C_j
\qquad
\mbox{for each } j=1,\dots,n,
\]
and thus
\[
L[1]=1,\quad
L[2]=1,\quad
L[3]=3,\quad
L[4]=3,
\]
\[
L[5]=3,\quad
L[6]=5,\quad
L[7]=5,\quad
L[8]=5,
\]
that is,
\[
L=[1,1,3,3,3,5,5,5].
\]

By construction, for each $j$, the factor $w[L[j]..j]$ is the longest inverse Lyndon suffix of the prefix $w[1..j]$. In the present example, these factors are
\[
w[1..1]=\mathit{a},\qquad
w[1..2]=\mathit{aa},
\]
\[
w[3..3]=\mathit{b},\qquad
w[3..4]=\mathit{ba},
\]
\[
w[3..5]=\mathit{bab},\qquad
w[5..6]=\mathit{bb},
\]
\[
w[5..7]=\mathit{bba},\qquad
w[5..8]=\mathit{bbaa}.
\]

We now reconstruct $ICFL(w)$. Start with $j=8$. Since $L[8]=5$, the algorithm extracts $w[5..8]=\mathit{bbaa}$ and updates $j\gets 4$. Next, since $L[4]=3$, it extracts $w[3..4]=\mathit{ba}$ and updates $j\gets 2$. Finally, since $L[2]=1$, it extracts $w[1..2]=\mathit{aa}$ and updates $j\gets 0$, at which point the procedure stops.

Thus the factors are identified in the order $\mathit{bbaa}, \mathit{ba}, \mathit{aa}$, that is, from right to left. Writing them in the natural left-to-right order, we obtain
\[
ICFL(\mathit{aababbaa})=(\mathit{aa},\mathit{ba},\mathit{bbaa}).
\]
Hence the example exhibits explicitly the full chain
\[
\lambda^{-1}_w
\Longrightarrow
R
\Longrightarrow
(C_j)_j
\Longrightarrow
L
\Longrightarrow
ICFL(w).
\]
\end{example}

\section{Experimental Details}\label{app:experiments}

We implemented LCE-NSS for $\lambda$ and LCE-NGS for $\lambda^{-1}$ in C++17
(implementation and benchmark scripts are available online\footnote{\url{https://github.com/FLaTNNBio/inverselyndonarray}}).
The timing runs reported here were produced by the benchmark pipeline with \texttt{g++ -std=c++17 -O2}. We measured wall-clock construction time and, when relevant, report the recovery pass separately. All experiments were run on a Dell Precision 7960 Tower under Ubuntu Linux, with an Intel Xeon w5-3425 CPU and 128\,GB of ECC RAM. Correctness was checked against brute force on small and medium inputs and on crafted edge cases.

We used random words over alphabets of size $\sigma\in\{2,4,26\}$, structured synthetic families, and real texts from Pizza\&Chili and the Large Canterbury Corpus~\cite{pizza-chili,canterbury-corpus,canterbury-desc}. The profiling-enabled implementation also counted explicit character comparisons, LCE reuse hits inside the certified \proc{Smart-Lce} window, and explicit extension calls.

Table~\ref{tab:results-app} reports the random-input timings. On instances with $n\ge 5\times 10^4$, the mean ratio $\proc{LCE-NGS}/\proc{LCE-NSS}$ is $1.0060$, the median ratio is $0.9990$, and the observed range is $[0.9882,1.0939]$, which shows that the inverse construction matches the standard one closely across all alphabet sizes.

\begin{table}[tbp]
\caption{Core construction time ($\mu$s) on random words}
\label{tab:results-app}
\centering
\resizebox{\textwidth}{!}{\begin{tabular}{@{}r@{\quad}rr@{\qquad}rr@{\qquad}rr@{}}
\toprule
 & \multicolumn{2}{c}{$\sigma = 2$} & \multicolumn{2}{c}{$\sigma = 4$} & \multicolumn{2}{c}{$\sigma = 26$} \\
\cmidrule(lr){2-3}\cmidrule(lr){4-5}\cmidrule(lr){6-7}
$n$ & LCE-NSS & LCE-NGS & LCE-NSS & LCE-NGS & LCE-NSS & LCE-NGS \\
\midrule
$10^3$ & 15.4 & 15.0 & 15.8 & 15.4 & 15.6 & 16.0 \\
$5\times 10^3$ & 79.6 & 84.2 & 95.6 & 101.2 & 94.0 & 92.6 \\
$10^4$ & 171.4 & 179.0 & 205.4 & 207.0 & 194.0 & 190.6 \\
$5\!\times\!10^4$ & 948.2 & 1\,037.2 & 1\,062.0 & 1\,059.2 & 1\,006.6 & 1\,000.6 \\
$10^5$ & 1\,974.2 & 1\,973.4 & 2\,187.0 & 2\,162.4 & 2\,060.6 & 2\,053.8 \\
$5\!\times\!10^5$ & 9\,197.8 & 9\,416.2 & 10\,503.2 & 10\,543.2 & 10\,087.8 & 9\,981.4 \\
$10^6$ & 18\,060.0 & 18\,320.2 & 20\,972.4 & 20\,955.0 & 20\,121.8 & 19\,919.2 \\
$2\!\times\!10^6$ & 35\,778.4 & 36\,154.0 & 41\,893.6 & 41\,840.6 & 40\,219.0 & 39\,744.4 \\
$5\!\times\!10^6$ & 89\,178.6 & 90\,033.6 & 102\,314.0 & 104\,479.2 & 100\,569.0 & 99\,436.6 \\
\bottomrule
\end{tabular}}
\end{table}

Table~\ref{tab:real-results-app} reports the results obtained on real files. Across the five corpora, the mean ratio is $1.0019$, the median ratio is $1.0024$, and the observed range is $[0.9665,1.0325]$. The structured synthetic families show the same practical behavior. Over all structured instances with $n\ge 5\times 10^4$, the mean ratio is $0.9852$, the median ratio is $0.9981$, and the observed range is $[0.8848,1.0874]$. Thus the inverse kernel remains close to the standard one across random, structured, and real inputs.

\begin{table}[tbp]
\caption{Core construction time ($\mu$s) on real corpora}
\label{tab:real-results-app}
\centering
\begin{tabular}{@{}lrrrr@{}}
\toprule
File & $n$ & LCE-NSS & LCE-NGS & Ratio \\
\midrule
\texttt{english.txt} & 5\,000\,000 & 98\,850.6 & 98\,846.6 & 1.0000 \\
\texttt{dna.txt} & 5\,000\,000 & 104\,074.2 & 104\,328.6 & 1.0024 \\
\texttt{bible.txt} & 4\,047\,392 & 75\,372.2 & 77\,819.2 & 1.0325 \\
\texttt{e.coli} & 4\,638\,690 & 96\,704.0 & 93\,460.0 & 0.9665 \\
\texttt{world192.txt} & 2\,473\,400 & 46\,763.6 & 47\,144.6 & 1.0081 \\
\bottomrule
\end{tabular}
\end{table}

To stress the border-correction path, we additionally profiled long-border families while recording explicit character comparisons, LCE reuse hits, and explicit extension calls inside \proc{Smart-Lce}. Each instance contains a repeated prefix-suffix border occupying either $25\%$ or $40\%$ of the full length. The timing ratios on these border-heavy inputs remain close to one, mean $1.0013$, median $1.0031$, and range $[0.9680,1.0257]$. Table~\ref{tab:border-counters-app} shows that all counters remain linear in $n$. In particular, from $10^5$ to $10^6$, both explicit comparison counts and explicit extension counts grow by about $10\times$ in both families, which matches Theorem~\ref{thm:linear}. Deep borders do affect constants, but they do not trigger superlinear rescanning.

\begin{table}[tbp]
\caption{Border-heavy profiling counters. Values are averages over three runs.}
\label{tab:border-counters-app}
\centering
\scriptsize
\begin{tabular}{@{}llrrrrrr@{}}
\toprule
Family & $n$ & NSS cmp. & NGS cmp. & NSS reuse & NGS reuse & NSS ext. & NGS ext. \\
\midrule
25\% border & 100\,000    &   213\,984 &   213\,427 &  29\,347 &  29\,258 &   148\,890 &   148\,233 \\
40\% border & 100\,000    &   200\,510 &   208\,599 &  44\,719 &  35\,317 &   131\,483 &   141\,882 \\
25\% border & 500\,000    &   945\,829 & 1\,039\,872 & 294\,036 & 179\,692 &   582\,948 &   705\,698 \\
40\% border & 500\,000    &   947\,176 &   863\,646 & 292\,529 & 393\,730 &   584\,150 &   475\,202 \\
25\% border & 1\,000\,000 & 2\,134\,164 & 2\,134\,660 & 293\,205 & 293\,234 & 1\,481\,797 & 1\,483\,100 \\
40\% border & 1\,000\,000 & 2\,135\,385 & 2\,123\,841 & 292\,240 & 306\,666 & 1\,483\,741 & 1\,468\,012 \\
\bottomrule
\end{tabular}
\end{table}